\documentclass[aip,jcp,preprint,amsmath,amssymb,floatfix,nofootinbib]{revtex4-2}

\usepackage[utf8]{inputenc}
\usepackage[T1]{fontenc}
\usepackage{graphicx}
\graphicspath{{./}{./figures/}{./figures/si/}{./si_figures/}}
\usepackage{amsmath,amssymb,amsthm}
\usepackage{bm}
\usepackage{booktabs}
\usepackage{longtable}
\usepackage{threeparttable}
\usepackage{array}
\usepackage{enumitem}
\usepackage[hidelinks]{hyperref}
\usepackage{xcolor}
\usepackage{microtype}
\usepackage{needspace}
\usepackage{tikz}
\usetikzlibrary{arrows.meta,positioning,fit,shapes.geometric,calc}

\let\oldsection\section
\renewcommand{\section}{\Needspace{8\baselineskip}\oldsection}

\setlist[itemize]{topsep=3pt,itemsep=2pt,parsep=0pt}
\setlist[enumerate]{topsep=3pt,itemsep=2pt,parsep=0pt}

\newtheorem{theorem}{Theorem}
\newtheorem{proposition}[theorem]{Proposition}
\theoremstyle{definition}
\newtheorem{definition}[theorem]{Definition}

\newcommand{\Cset}{\mathcal{C}}
\newcommand{\Rbb}{\mathbb{R}}

\newcommand{\Nrm}{\mathcal{N}}
\newcommand{\Tmap}{\mathcal{T}}
\newcommand{\one}{\mathbf{1}}

\newcommand{\Trans}{\mathsf{T}}
\newcommand{\Fish}{\mathcal{I}}
\newcommand{\wvec}{\bm{w}}
\newcommand{\yvec}{\bm{y}}
\newcommand{\muvec}{\bm{\mu}}
\newcommand{\Sig}{\Sigma}
\DeclareMathOperator{\rank}{rank}

\DeclareMathOperator*{\argmax}{arg\,max}
\DeclareMathOperator*{\argmin}{arg\,min}

\newcommand{\Exp}{\mathbb{E}}
\newcommand{\wavenum}{\ensuremath{\,\mathrm{cm}^{-1}}}
\newcommand{\angstrom}{\text{\AA}}

\newcommand{\DepositDOI}{10.5281/zenodo.21712522}
\newcommand{\DepositURL}{https://doi.org/\DepositDOI}
\newcommand{\DepositLink}{\expandafter\url\expandafter{\DepositURL}}

\begin{document}

\title{Finite-Resolution Identifiability and Measurement Design for
Molecular Conformer Spectroscopy}

\author{Megan Simons}
\email{msimons@recognitionphysics.org}
\affiliation{Recognition Physics Institute, Austin, Texas, USA}
\author{Jonathan Washburn}
\affiliation{Recognition Physics Institute, Austin, Texas, USA}

\begin{abstract}
Conformer assignment is meaningful only when the measurement can distinguish
the candidate structures. We formulate conformer spectroscopy as a
measurement-model-dependent identifiability problem in which each modality
defines an observation law with explicit experimental and theoretical
uncertainty. Equality of observation laws defines exact classes that refine as
modalities are added, whereas finite-resolution ambiguity is non-transitive and
is represented by a Bayes-error graph. We also give rank and
constrained-Fisher criteria for population recovery from unnormalized additive
spectra.
The framework is applied to audited B3LYP-D3(BJ)/def2-TZVP ensembles of
1,2-difluoroethane, ethylene glycol, and \emph{n}-pentane. Under the declared
shared-covariance working model, IR separates all non-mirror pairs, leaving
only mirror pairs that are exactly degenerate under the achiral observation
maps. This within-method result does not imply general IR sufficiency: in a
six-case PBE0-to-B3LYP diagnostic, fixed calibration recovers one intended
representative, whereas scale-and-shift profiling with a conservative
cross-method covariance recovers five representatives and all six achiral
classes. Under a combined stress-test covariance, \emph{n}-pentane develops one
non-mirror quotient ambiguity. Three individual Raman windows remove it, while
no tested IR window does. Mirror-collapsed class populations remain identifiable
and well conditioned under the calibrated-scale working model, whereas separate
mirror-partner populations are exactly unidentifiable from achiral additive
spectra.
\end{abstract}

\keywords{conformational analysis, vibrational spectroscopy, identifiability,
experimental design, Fisher information, Bayes error, infrared,
Raman, rotational constants}

\maketitle

\section{Introduction}
\label{sec:intro}

\subsection{Distinguishability as a prerequisite for conformer assignment}

Assigning molecular conformers from vibrational spectra is a long-standing
experimental and computational problem. Temperature-dependent infrared and Raman
intensities, matrix isolation, and related vibrational methods have long been
used to identify rotamers and estimate conformational energy
differences~\cite{Klaeboe1995,Durig1992DFE}. Microwave spectroscopy likewise
assigns conformers from rotational constants and dipole
components~\cite{GordyCook1984}, while additive spectral mixtures have been
treated by self-modeling curve resolution and related chemometric
unmixing~\cite{LawtonSylvestre1971}. More recently, the IR spectra alignment
(IRSA) algorithm combines infrared spectra with Raman or vibrational circular
dichroism (VCD) and uses pseudo-Voigt deconvolution and frequency scaling to
assign relative stereochemistry~\cite{IRSA2023}. A deep-learning method,
Vib2Conf (arXiv preprint), discriminates three-dimensional conformers from
vibrational spectra and reports top-1 recall exceeding $95\%$ on conventional
spectrum-to-structure benchmarks and $82.06\%$ on a near-isomeric conformer test
set with root-mean-square deviations (RMSD) near $1\,\angstrom$~\cite{Vib2Conf2026}.
These approaches primarily ask whether a model can predict a conformer from a
spectrum. That question is important, but it is downstream of a more basic
one. A
classifier can be accurate on a benchmark while providing no account of
why it succeeds, no model-conditional certificate that a given pair of
conformers is even separable under the declared observation law and covariance,
and no guidance about what to measure next when it fails. We therefore pose the
prior question: given a specified experimental resolution, noise level,
theoretical uncertainty, and set of available spectroscopic modalities, which
conformers are actually distinguishable, and what additional observation would
resolve the remaining ambiguity?

We instead formulate the problem as one of identifiability and experimental
design: we ask what information the physics of the measurement makes
available, independently of any particular estimator, and we ask how to
allocate measurement effort optimally. The distinction matters because the answer to the design question
is an invariant of the measurement model: a model-conditional pairwise Bayes
performance limit at equal priors, rather than a general multiclass top-1 bound
for arbitrary priors. Classifier accuracy, by contrast, is a property of a
particular trained model and its benchmark.

\subsection{Contributions and scope}
\label{sec:contributions}

The proposed formulation complements classifier accuracy with
model-conditional measures of pairwise ambiguity, population identifiability,
and measurement value: an ambiguity certificate at a stated resolution and
error level; a pairwise decision-error estimate rooted in statistical decision
theory; an explanation of which added modality or spectral window resolves a
given ambiguity; criteria for whether conformer-population recovery is
identifiable for additive spectra before inversion is attempted; and
sensitivity of those conclusions to the working covariance and
conformer-ensemble curation. The specific contributions are a clean separation
between exact observational equivalence, an equivalence relation that induces a
quotient and refines monotonically as modalities are added
(Theorem~\ref{thm:refinement}), and finite-noise ambiguity, an explicitly
non-transitive graph (Proposition~\ref{prop:nontransitive}), together with the
model-conditional Bayes-error measure that builds it
(Section~\ref{sec:finiteresolution}); rank and constrained-Fisher criteria for
exact and approximate conformer-population identifiability in additive mixtures
(Theorem~\ref{thm:identifiability} and
Eqs.~\eqref{eq:constrainedfisher}--\eqref{eq:fisherpsd}), connected below to a
numerical rank demonstration on achiral IR/Raman population columns;
experimental-design objectives for selecting the next modality or spectral
window for discrimination, with corresponding information criteria for
population recovery from additive spectra (Section~\ref{sec:design}); an
evaluation design that incorporates theoretical model discrepancy explicitly
rather than matching ideal spectra to themselves
(Sections~\ref{sec:protocol}--\ref{sec:pilot}); and an open-source
implementation, exercised on three molecules that span a difficulty gradient,
that demonstrated refinement and edge monotonicity, resolved a stressed
ambiguity down to individual spectral windows, and audited the dependence of
those results on covariance and conformer curation (Section~\ref{sec:pilot}).

Although the underlying mathematical ingredients are established, their
integration yields an operational framework linking conformer
distinguishability, population recovery, and measurement design, demonstrated
here on computed pure-conformer examples. Reflection invariance gives an exact
physical reason that mirror-related conformers coincide under the declared
achiral maps. Numerical edge counts and design implications below are
interpreted under the assumptions collected in Section~\ref{sec:discussion}.

\section{Observable maps and the measurement model}
\label{sec:observables}

\subsection{Conformers, modalities, and forward maps}

Fix a molecule $m$ with a finite candidate set of low-energy conformers
\begin{equation}
\Cset_m=\{c_1,c_2,\ldots,c_{n_m}\}.
\end{equation}
A spectroscopic modality $a$ is a physical observation channel such as
infrared absorption (IR), Raman scattering, or gas-phase rotational
spectroscopy. Each modality is associated with a forward observable map
\begin{equation}
S_a:\Cset_m\longrightarrow\Rbb^{d_a},
\qquad c_i\longmapsto S_a(c_i),
\end{equation}
where $S_a(c_i)$ is a discretized theoretical signal computed from the
electronic-structure description of $c_i$. Representative maps are
\begin{equation}
S_{\mathrm{IR}}(c_i),\qquad
S_{\mathrm{Raman}}(c_i),\qquad
S_{\mathrm{rot}}(c_i)=(A_i,B_i,C_i),\qquad
S_{\bm{\mu}}(c_i)=f_{\bm{\mu}}(c_i),
\end{equation}
where $S_{\mathrm{IR}}$ and $S_{\mathrm{Raman}}$ are binned intensity vectors
over a wavenumber grid and $S_{\mathrm{rot}}$ collects the three rotational
constants in the convention $A\ge B\ge C$ (equivalently, the ordered principal
moments of inertia $I_a\le I_b\le I_c$).

\paragraph{Scope of the conformer-state representation.}
A vertex $c_i$ denotes an optimized local minimum and $S_a(c_i)$ its harmonic
or rigid-rotor observable. This representation is appropriate when the minima
are localized on the measurement timescale or are used deliberately as a
structural library. Rapid interconversion, tunneling, and large-amplitude motion
instead require dynamical or spectroscopic states, such as vibrationally averaged
families or tunneling eigenstates. The ethylene-glycol calculations below should
therefore be interpreted as distinguishability calculations on a static
local-minimum library rather than as a model of its complete gas-phase
rotation--torsion eigenstate
structure~\cite{ChristenMuller2003,MullerChristen2004}. A \emph{declared mirror
pair} is a pair of minima identified as reflection images by the computational
audit.

The dipole channel requires a precise definition because a raw permanent-dipole
vector is neither rotation- nor reflection-invariant. We use the
reflection-invariant feature
\begin{equation}
f_{\bm{\mu}}(c_i)=\bigl(\,|\mu_{a,i}|,\ |\mu_{b,i}|,\ |\mu_{c,i}|\,\bigr),
\label{eq:dipolefeature}
\end{equation}
the magnitudes of the projections of the permanent dipole onto the principal
inertial axes $(a,b,c)$ ordered as above. The absolute principal-axis components
remove the sign ambiguity associated with axis orientation and reflection. Near
a principal-axis degeneracy, the individual components must be replaced by the
dipole norm within the degenerate subspace; the numerical guard used here is
documented in the supplementary material. Principal-axis dipole components are
obtained from Stark shifts or intensities of assigned rotational
transitions~\cite{GordyCook1984}, so $f_{\bm\mu}$ is treated as a feature of a
rotational-spectroscopy experiment rather than as an independently available
standalone modality.

These four observation channels were exercised in the study of
Section~\ref{sec:pilot}; VCD, UV--visible absorption, and X-ray absorption
(XAS) spectra fit the same observable-map structure and are natural extensions.

\subsection{From ideal spectra to measurements}

Ideal theoretical spectra are not measurements, and a framework that compared
only ideal spectra would establish nothing beyond the fact that spectra
computed by one program differ from one another. We therefore model the actual
observation from modality $a$ of conformer $c_i$ as
\begin{equation}
Y_a=\Tmap_a\!\left[S_a(c_i);\eta_a\right]+\varepsilon_a,
\label{eq:obsmodel}
\end{equation}
where
\begin{itemize}
\item $\Tmap_a$ is the measurement operator, applying instrumental
broadening (a line-shape convolution set by the resolution $\Delta\nu$),
frequency scaling, spectral shifts, baseline distortion, and normalization;
\item $\eta_a$ collects the nuisance parameters of $\Tmap_a$ (scale
factor, global and local shifts, baseline coefficients, broadening widths);
\item $\varepsilon_a$ is the error, comprising both experimental noise
and theoretical model discrepancy.
\end{itemize}

A convenient and tractable model takes the processed signal to be Gaussian
conditional on the nuisance parameters,
\begin{equation}
Y_a\mid C=c_i,\eta_a\;\sim\;\Nrm\!\left(\muvec_{a,i}(\eta_a),\,\Sig_a\right),
\qquad \muvec_{a,i}(\eta_a):=\Tmap_a\!\left[S_a(c_i);\eta_a\right],
\label{eq:gaussmodel}
\end{equation}
in which the covariance $\Sig_a$ carries both the measurement noise and
an estimate of the theoretical model-error covariance (Section~\ref{sec:noise}
specifies the working model). This is the essential modeling
decision that prevents the analysis from overstating distinguishability:
distinguishability is assessed against a covariance that includes theoretical
uncertainty, not against noiseless self-consistent spectra.

We distinguish three levels of the observation model, and are explicit about
which is used where.
\begin{enumerate}
\item the conditional Gaussian model~\eqref{eq:gaussmodel}, valid for a
fixed value of the nuisance parameters $\eta_a$;
\item the exact marginalized observation-law map, in which the nuisance parameters
are integrated against a prior $p(\eta_a)$,
\begin{equation}
p_a(y\mid c_i)=\int p_a(y\mid c_i,\eta_a)\,p(\eta_a)\,\mathrm{d}\eta_a;
\label{eq:marginal}
\end{equation}
this marginal is in general a non-Gaussian mixture, and its covariance
can depend on the conformer;
\item a declared Gaussian working approximation of~\eqref{eq:marginal},
obtained either by fixing $\eta_a$ at a profiled (best-alignment) value in the
spirit of IRSA~\cite{IRSA2023} or by moment matching, and used only for the
tractable pairwise separations of Section~\ref{sec:finiteresolution}.
\end{enumerate}
Marginalization~\eqref{eq:marginal} is the principled default because it
accounts for, rather than freely exploiting, the alignment freedom that any
real assignment must also expend. All closed-form Bayes-error statements in this
paper are computed under level (3) and are therefore conditional on that
approximation; where the marginal covariance is strongly conformer-dependent we
instead evaluate the Bayes error numerically or bound it
(Section~\ref{sec:finiteresolution}).

A note on normalization. The measurement operator $\Tmap_a$ includes an optional
intensity normalization (area or peak), which is a nonlinear operation:
$N(\sum_i w_i S_i)\ne\sum_i w_i N(S_i)$ for a mixture with populations $w_i$.
Normalization is harmless for pure-conformer discrimination
(Sections~\ref{sec:observationlaws}--\ref{sec:ambiguitygraph}), where each conformer
is processed on its own, but it breaks the additive mixture model of
Section~\ref{sec:populations}. There we therefore work with unnormalized
additive signals. The calibrated-scale and unknown-scale cases are treated
separately in Section~\ref{sec:populations}.

\section{Observation-law maps and the exact refinement law}
\label{sec:observationlaws}

The overall construction is summarized in Figure~\ref{fig:framework}: conformers
pass through physically explicit observable maps to observation-law maps that return
measurement distributions, from which we build both an exact quotient and a
finite-resolution ambiguity graph. The schematic separates the exact quotient
from the finite-resolution graph:
distinguishability is defined by the measurement model, and multimodality
refines that model rather than guaranteeing additive resolution.

\begin{figure}[t]
\centering
\resizebox{\linewidth}{!}{%
\begin{tikzpicture}[
  >=Latex, node distance=7mm and 14mm, font=\small,
  box/.style={draw, rounded corners, align=center, inner sep=4pt, minimum height=9mm},
  obs/.style={box, fill=blue!5},
  outc/.style={box, fill=green!6}
]
\node[box] (conf) {Conformers\\$\Cset_m=\{c_1,\dots,c_{n_m}\}$};
\node[obs, right=of conf] (maps) {Observable maps\\$S_a,\ \Tmap_a,\ \varepsilon_a$};
\node[obs, right=of maps] (rec) {Observation-law maps\\$R_a(c_i)=p_a(\cdot\mid c_i)$};
\node[outc, above right=5mm and 14mm of rec] (quot) {Exact quotient\\$\Cset_m/{\sim_S}$};
\node[outc, below right=5mm and 14mm of rec] (graph) {Ambiguity graph\\$G_S(\alpha)$};
\draw[->] (conf) -- (maps);
\draw[->] (maps) -- (rec);
\draw[->] (rec) -- (quot);
\draw[->] (rec) -- (graph);
\end{tikzpicture}%
}
\caption{Framework schematic. Conformers pass through physically explicit
observable maps to observation-law maps that return measurement distributions; exact
observational equality yields a quotient, while finite-resolution ambiguity
yields a graph. The ambiguity graph is constructed under the declared
shared-covariance Gaussian working model, not the exact marginalized law
(Section~\ref{sec:finiteresolution}). Adding a modality refines the exact quotient
(Theorem~\ref{thm:refinement}) and can only remove ambiguity edges
[Eq.~\eqref{eq:edge-monotonicity}]; finite-noise ambiguity is a graph, not a
partition (Proposition~\ref{prop:nontransitive}).}
\label{fig:framework}
\end{figure}

\subsection{Observation-law maps and induced measurement distributions}

The central object is not an ideal spectrum but the distribution of
measurements it induces.

\begin{definition}[Observation-law map]
The observation-law map for modality $a$ maps each conformer to its
measurement distribution,
\begin{equation}
R_a(c_i)=p_a(\,\cdot\mid c_i),
\end{equation}
the law on $\Rbb^{d_a}$ of the observation $Y_a$ given $C=c_i$, with nuisance
parameters marginalized as in~\eqref{eq:marginal}.
\end{definition}

\begin{definition}[Exact observational equivalence]
Conformers $c_i,c_j$ are observationally equivalent under modality $a$,
written $c_i\sim_a c_j$, when their observation-law maps coincide as probability
measures:
\begin{equation}
c_i\sim_a c_j
\ \Longleftrightarrow\
R_a(c_i)=R_a(c_j)
\ \Longleftrightarrow\
p_a(\cdot\mid c_i)=p_a(\cdot\mid c_j)\ \text{a.e.},
\end{equation}
Equality of the densities almost everywhere is equivalent to equality of the
induced laws on $\Rbb^{d_a}$; we use the two interchangeably.
\end{definition}

Because $\sim_a$ is defined by equality of probability laws, it is
automatically reflexive, symmetric, and transitive. It therefore partitions
$\Cset_m$ into exact observational classes, denoted $\Cset_m/{\sim_a}$. This
quotient is the appropriate exact object: two conformers in the same class
cannot be distinguished by modality $a$, even with arbitrarily many repeated
measurements governed by the same observation law.

\paragraph{Running chemical example: 1,2-difluoroethane.}
The three-state 1,2-difluoroethane ensemble used below consists of an
\emph{anti} conformer and a declared mirror pair,
\emph{gauche}$^{+}$/\emph{gauche}$^{-}$. For the reflection-invariant achiral
observables used in this study, the two \emph{gauche} partners have identical
symmetrized observation laws, whereas the \emph{anti} law is distinct. For an
achiral modality set $S$, the exact quotient therefore has the form
\begin{equation}
\Cset_{\mathrm{DFE}}/{\sim_S}
=
\bigl\{\{c_{\mathrm{anti}}\},
       \{c_{g^+},c_{g^-}\}\bigr\}.
\label{eq:dfe-running-quotient}
\end{equation}
This example will recur below: finite resolution can make the
\emph{anti}/\emph{gauche} distinction easier or harder, but no combination of
the declared achiral channels can split the exact mirror class. A chiral or
parity-sensitive observable would be required to do so.

\subsection{Combining modalities}

For a set $S$ of modalities, write $y_S=(y_a)_{a\in S}$ for the joint
observation and define the combined observation-law map
\begin{equation}
R_S(c_i)=p(y_S\mid c_i).
\end{equation}
The channels are frequently conditionally independent given the
conformer: once $c_i$ is fixed, the physical noise processes of an infrared
spectrometer, a Raman spectrometer, and a microwave spectrometer are
independent, so
\begin{equation}
p(y_S\mid c_i)=\prod_{a\in S}p_a(y_a\mid c_i).
\label{eq:condindep}
\end{equation}
We flag~\eqref{eq:condindep} as an assumption rather than a law: shared
theoretical model error (for instance a common density-functional bias)
correlates the channels, and Section~\ref{sec:noise} describes how such
correlations are absorbed into a joint covariance when they cannot be
neglected.

The joint exact equivalence $\sim_S$ is defined exactly as before, with
$p_S$ in place of $p_a$.

\begin{theorem}[Refinement law]
\label{thm:refinement}
For any modality set $S$ and any additional modality $b$,
\begin{equation}
{\sim_{S\cup\{b\}}}\ \subseteq\ {\sim_S}\cap{\sim_b}.
\label{eq:refine-incl}
\end{equation}
If the joint observations factor as in~\eqref{eq:condindep}, then equality
holds:
\begin{equation}
{\sim_{S\cup\{b\}}}\ =\ {\sim_S}\cap{\sim_b}.
\label{eq:refine-eq}
\end{equation}
\end{theorem}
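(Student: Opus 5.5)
The inclusion \eqref{eq:refine-incl} follows by marginalization. The equality \eqref{eq:refine-eq} follows by taking products of laws under \eqref{eq:condindep}. Throughout we treat each relation $\sim_T$ as the subset $\{(c_i,c_j): R_T(c_i)=R_T(c_j)\}$ of $\Cset_m\times\Cset_m$, so inclusion and intersection are meant set-theoretically.

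For the inclusion, suppose $c_i\sim_{S\cup\{b\}}c_j$, i.e.\ the joint laws of $(y_S,y_b)$ under $c_i$ and $c_j$ coincide as probability measures on $\Rbb^{d_S}\times\Rbb^{d_b}$. The law $R_S(c)$ is the pushforward of $R_{S\cup\{b\}}(c)$ under the coordinate projection $(y_S,y_b)\mapsto y_S$, and $R_b(c)$ is the pushforward under $(y_S,y_b)\mapsto y_b$. Pushforwards of equal measures under the same measurable map are equal, so $R_S(c_i)=R_S(c_j)$ and $R_b(c_i)=R_b(c_j)$. Hence $c_i\sim_S c_j$ and $c_i\sim_b c_j$. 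No independence is needed here. One point to state explicitly is that the marginal of the joint law, with nuisance parameters integrated as in \eqref{eq:marginal}, is the same object as the single-modality law used to define $\sim_S$ and $\sim_b$. I will take that as part of the modeling convention: the joint model is consistent with the single-channel models.

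For the reverse inclusion under \eqref{eq:condindep}, suppose $R_S(c_i)=R_S(c_j)$ and $R_b(c_i)=R_b(c_j)$. The factorization requires that, conditional on the conformer, $y_b$ be independent of $y_S$. This is the case $p(y_{S\cup\{b\}}\mid c)=p_S(y_S\mid c)\,p_b(y_b\mid c)$ of \eqref{eq:condindep}, which is implied when all channels factor. Then $R_{S\cup\{b\}}(c)=R_S(c)\otimes R_b(c)$ is a product measure. Equal factors give equal products: the two product measures agree on measurable rectangles $A\times B$, rectangles form a $\pi$-system generating the product $\sigma$-algebra, and the $\pi$–$\lambda$ (Dynkin) uniqueness theorem extends agreement to all measurable sets. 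In density form, the products of a.e.-equal densities are a.e.\ equal with respect to the product reference measure. Hence $c_i\sim_{S\cup\{b\}}c_j$, and combined with the first part this gives equality.

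The only step needing care is the independence step with marginalized nuisance parameters. If channels share nuisance parameters or a common theoretical bias, integrating over $\eta$ destroys the product structure even when the conditional laws factor. In that case only \eqref{eq:refine-incl} survives, and the inclusion can be strict: two conformers can match in each marginal yet differ in their cross-channel dependence. I would remark on this and point to a simple Gaussian example with equal marginals but different cross-covariances to show strictness. For the equality I would state that \eqref{eq:condindep} is assumed to hold after marginalization. This holds when the nuisance parameters are independent across channels, since then the integral factorizes.
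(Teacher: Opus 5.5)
Your proof is correct and follows the paper's own argument. You get the inclusion by marginalizing the joint law onto $y_S$ and onto $y_b$ (you push forward measures where the paper integrates densities), and the reverse inclusion by multiplying equal factors under \eqref{eq:condindep}. Your extra points are sound refinements of that same argument rather than a different approach: the $\pi$--$\lambda$ uniqueness step, the a.e.\ handling of densities, the marginal-consistency convention, and the observation that the factorization must hold after nuisance marginalization.
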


\begin{proof}
\emph{Inclusion~\eqref{eq:refine-incl}.} Suppose $c_i\sim_{S\cup\{b\}}c_j$, i.e.
$p_{S\cup\{b\}}(y_S,y_b\mid c_i)=p_{S\cup\{b\}}(y_S,y_b\mid c_j)$ for all
$(y_S,y_b)$. Integrating both sides over $y_b$ gives the marginal
$p_S(y_S\mid c_i)=p_S(y_S\mid c_j)$ for all $y_S$, so $c_i\sim_S c_j$;
integrating over $y_S$ gives $c_i\sim_b c_j$. Hence
$c_i\sim_{S\cup\{b\}}c_j$ implies both $c_i\sim_S c_j$ and $c_i\sim_b c_j$.
This step uses only marginalization and holds without any independence
assumption.

\emph{Equality~\eqref{eq:refine-eq} under~\eqref{eq:condindep}.} Suppose
$c_i\sim_S c_j$ and $c_i\sim_b c_j$. Using the factorization,
\begin{equation}
\begin{aligned}
p_{S\cup\{b\}}(y_S,y_b\mid c_i)
&=p_S(y_S\mid c_i)\,p_b(y_b\mid c_i)\\
&=p_S(y_S\mid c_j)\,p_b(y_b\mid c_j)
=p_{S\cup\{b\}}(y_S,y_b\mid c_j),
\end{aligned}
\end{equation}
for all $(y_S,y_b)$, so $c_i\sim_{S\cup\{b\}}c_j$. Combined with the inclusion,
this gives~\eqref{eq:refine-eq}.
\end{proof}

Equation~\eqref{eq:refine-incl} also has a direct partition
interpretation: every class of $\Cset_m/{\sim_{S\cup\{b\}}}$ is contained in a
class of $\Cset_m/{\sim_S}$. Adding a modality may therefore split an existing
observational class, but it cannot merge classes that were already distinct.
This is the exact, zero-noise sense in which additional spectroscopy cannot
hurt. In the 1,2-difluoroethane example of
Eq.~\eqref{eq:dfe-running-quotient}, another achiral modality may strengthen the
separation of \emph{anti} from the \emph{gauche} family, but it cannot split the
\emph{gauche}$^{+}$/\emph{gauche}$^{-}$ class because those partners remain
identical under every declared achiral observation law.

\section{Finite-resolution distinguishability}
\label{sec:finiteresolution}

Exact equality of measurement distributions is too strict for real
spectroscopy: two conformers whose distributions differ by an amount smaller
than the noise are, operationally, not distinguishable. We therefore replace
exact equivalence by a decision-theoretic separation.

\subsection{Pairwise Bayes error for Gaussian observations}

Consider deciding between the two hypotheses $C=c_i$ and $C=c_j$ from a single
observation $y_a$ under the Gaussian model~\eqref{eq:gaussmodel} with equal
priors. Equal priors are used because the graph is a pairwise
measurement-resolution diagnostic, not a thermodynamic-population classifier;
unequal priors would mix distinguishability with assumed conformer abundances.
When the two conformers share a covariance $\Sig_a$, define the squared
Mahalanobis separation
\begin{equation}
d_a^2(i,j)=\bigl(\muvec_{a,i}-\muvec_{a,j}\bigr)^{\Trans}
\Sig_a^{-1}\bigl(\muvec_{a,i}-\muvec_{a,j}\bigr).
\label{eq:mahalanobis}
\end{equation}
The Bayes-optimal decision rule is the likelihood-ratio (here, linear) test,
and its error probability is
\begin{equation}
P_{e,a}(i,j)=\Phi\!\left(-\tfrac12 d_a(i,j)\right),
\label{eq:bayeserror}
\end{equation}
where $\Phi$ is the standard-normal cumulative distribution function. This is
the standard two-class Gaussian result~\cite{Fukunaga1990,Kay1993}, and it is the
exact Bayes error for the declared shared-covariance Gaussian working
model of level (3) in Section~\ref{sec:observables}, not for the exact
marginalized observation law~\eqref{eq:marginal}, which is a non-Gaussian mixture.
Equation~\eqref{eq:bayeserror} is therefore a model-conditional decision error
under the declared Gaussian working model and covariance, rather than a
model-free physical quantity. It converts a spectral separation into a
probability and answers questions of the operational form:
\begin{quote}
At $8\wavenum$ resolution and the estimated model-error level, are $c_i$ and $c_j$ distinguishable with fewer than $5\%$ decision errors?
\end{quote}
Since $P_{e,a}(i,j)<0.05$ is equivalent to $d_a(i,j)>2\,\Phi^{-1}(0.95)\approx
3.29$, the question reduces to a threshold on the Mahalanobis separation.

The derivation is the standard equal-prior likelihood-ratio calculation: the
log-likelihood difference is a linear statistic with variance $d_a^2(i,j)$
under either hypothesis, and~\eqref{eq:bayeserror} follows by symmetry.

\subsection{Additivity and monotonicity}

Under conditional independence, the joint covariance is block diagonal,
$\Sig_S=\bigoplus_{a\in S}\Sig_a$, and the joint mean difference is the
concatenation of the per-modality differences. The Mahalanobis quadratic form
therefore separates into nonnegative channel contributions,
\begin{equation}
d_S^2(i,j)=\sum_{a\in S}d_a^2(i,j).
\label{eq:additive}
\end{equation}
Thus an independent modality contributes exactly its own squared separation; a
channel with no contrast for a particular pair contributes zero, while an
informative channel increases the combined separation.

A stronger ordering statement does not require Gaussianity or conditional
independence. Any decision rule based only on $y_S$ remains available after the
observation is enlarged to $(y_S,y_b)$: the rule can simply ignore $y_b$.
Consequently, the Bayes-optimal pairwise error satisfies
\begin{equation}
P_{e,S\cup\{b\}}(i,j)\ \le\ P_{e,S}(i,j),
\label{eq:errmonotone}
\end{equation}
for every added modality $b$. Equation~\eqref{eq:errmonotone} is the finite-noise
counterpart of the exact refinement law in Theorem~\ref{thm:refinement}. Under
the shared-covariance Gaussian model it also follows from
Eq.~\eqref{eq:additive}, because $d^2$ can only increase and
$\Phi(-\tfrac12 d)$ decreases with $d$.

\subsection{Scope of the shared-covariance approximation}

All ambiguity graphs reported below use the exact equal-prior Bayes error for
the declared shared-covariance Gaussian working model. When the covariance
depends materially on the conformer, the Bayes error must instead be evaluated
numerically. A Bhattacharyya-distance calculation can provide a one-sided
certificate that a pair is resolved~\cite{Kailath1967,Fukunaga1990}, but failure
of that certificate does not prove ambiguity. The corresponding formulas and logical qualifications are
given in the supplementary material.

\section{Ambiguity graphs and the failure of transitivity}
\label{sec:ambiguitygraph}

\subsection{Finite-noise ambiguity is not an equivalence relation}

It is tempting to define an operational ``indistinguishability'' by thresholding
the decision error,
\begin{equation}
c_i\approx c_j\quad\Longleftrightarrow\quad P_{e,S}(i,j)>\alpha,
\label{eq:naiverule}
\end{equation}
and to treat the resulting relation as if it partitioned $\Cset_m$. This is a
mathematical error: the relation~\eqref{eq:naiverule} is reflexive and
symmetric but not transitive.

\begin{proposition}[Non-transitivity of finite-resolution ambiguity]
\label{prop:nontransitive}
There exist configurations of conformer means for which $c_1\approx c_2$ and
$c_2\approx c_3$ but $c_1\not\approx c_3$ under the rule~\eqref{eq:naiverule}.
\end{proposition}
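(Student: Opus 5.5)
The plan is an explicit construction in one dimension under the shared-covariance Gaussian working model, where the Bayes error is given in closed form by Eq.~\eqref{eq:bayeserror}. Take a single modality with $d_a=1$ and variance $\Sig_a=\sigma^2>0$, and place three collinear conformer means at $\mu_1=0$, $\mu_2=\delta$, $\mu_3=2\delta$ with $\delta>0$. By Eq.~\eqref{eq:mahalanobis} the pairwise separations are $d(1,2)=d(2,3)=\delta/\sigma$ and $d(1,3)=2\delta/\sigma$. The resulting Bayes errors are therefore $P_e(1,2)=P_e(2,3)=\Phi(-\delta/2\sigma)$ and $P_e(1,3)=\Phi(-\delta/\sigma)$.

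The key step is to choose the threshold $\alpha$ and the ratio $t=\delta/\sigma$ so that
\[
\Phi(-t)\le\alpha<\Phi(-t/2).
\]
Because $\Phi$ is strictly increasing and $-t<-t/2$ for every $t>0$, the interval $[\Phi(-t),\Phi(-t/2))$ is nonempty. Any $\alpha$ in it gives $c_1\approx c_2$ and $c_2\approx c_3$ but $c_1\not\approx c_3$ under the rule~\eqref{eq:naiverule}. The proposition only asserts existence, so a single example suffices. Conversely, for a prescribed $\alpha\in(0,\tfrac12)$ one may set $d^*=-2\Phi^{-1}(\alpha)$, the critical separation, and pick $t$ with $d^*/2<t<d^*$. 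For concreteness, take $\alpha=0.05$, so $d^*\approx3.29$, and $t=2.5$. Then $P_e(1,2)=\Phi(-1.25)\approx0.106>0.05$ and $P_e(1,3)=\Phi(-2.5)\approx0.006<0.05$.

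To show the example is not an artifact of dimension one, embed it in $\Rbb^{d_a}$ by placing the means along any fixed direction. Mahalanobis distance along a line scales linearly in the displacement, so collinear, equally spaced means always give $d(1,3)=2d(1,2)$. I will also remark that the construction reflects the triangle inequality of the Mahalanobis metric: ambiguity is a threshold on a distance, and distances add along chains, which is why transitivity fails.

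There is no real obstacle. The only point requiring care is to confirm that the rule~\eqref{eq:naiverule} uses a strict inequality, so that the boundary case $\alpha=\Phi(-t)$ counts as resolved. Choosing $\alpha$ strictly inside the interval avoids this issue entirely.
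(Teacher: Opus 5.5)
Your construction is correct and is essentially the paper's proof: three equally spaced means of a single scalar observable with a shared variance, evaluated through Eq.~\eqref{eq:bayeserror}. The paper fixes $\Delta=2$, unit variance, and $\alpha=0.05$, obtaining $\Phi(-1)\approx0.159$ and $\Phi(-2)\approx0.023$, whereas your parametrized window $d^\ast/2\le t<d^\ast$ (with your choice $t=2.5$) is only a mild generalization covering every $\alpha\in(0,\tfrac12)$.
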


\begin{proof}
Take a single scalar observable ($d_a=1$) with unit variance and equal
covariance, and place three conformer means at $\mu_1=0$, $\mu_2=\Delta$, and
$\mu_3=2\Delta$ with $\Delta=2$. Then $d(1,2)=d(2,3)=2$ and $d(1,3)=4$,
so by~\eqref{eq:bayeserror}
\begin{equation}
P_e(1,2)=P_e(2,3)=\Phi(-1)\approx 0.159,\qquad
P_e(1,3)=\Phi(-2)\approx 0.023.
\end{equation}
Choosing the threshold $\alpha=0.05$ gives $c_1\approx c_2$ and $c_2\approx
c_3$ (both errors exceed $0.05$) while $c_1\not\approx c_3$ (its error is below
$0.05$). Transitivity fails.
\end{proof}

This non-transitivity has a direct representational consequence:
\begin{itemize}
\item exact observational equality is transitive and produces a quotient
(Section~\ref{sec:observationlaws});
\item finite-noise ambiguity is generally not transitive and must be
represented as a graph, not a partition.
\end{itemize}
Forcing~\eqref{eq:naiverule} into an equivalence relation, for example by
taking transitive closure, would incorrectly merge resolvable conformers and
would overstate ambiguity.

\subsection{The ambiguity graph and its observables}

\begin{definition}[Ambiguity graph]
For a modality set $S$ and tolerance $\alpha\in(0,\tfrac12)$, the
ambiguity graph is
\begin{equation}
G_S(\alpha)=(V,E_S),\qquad V=\Cset_m,\qquad
E_S=\bigl\{\{i,j\}:1\le i<j\le n_m,\ P_{e,S}(i,j)>\alpha\bigr\}.
\end{equation}
Here $P_{e,S}(i,j)$ is the Bayes error evaluated exactly under the declared
shared-covariance Gaussian model~\eqref{eq:bayeserror} or numerically for
conformer-dependent covariance; an edge therefore certifies genuine ambiguity
($P_e>\alpha$). When only a one-sided error bound is available, we report a
graph of pairs not certified as resolved rather than calling it an ambiguity
graph; details are given in the supplementary material.
\end{definition}

Combining the graph definition with Eq.~\eqref{eq:errmonotone} gives
\begin{equation}
E_{S\cup\{b\}}\subseteq E_S.
\label{eq:edge-monotonicity}
\end{equation}
Thus adding a modality can remove finite-noise ambiguity edges but cannot create
new ones under the same candidate set, priors, and observation model. If a pair
is already resolved using $S$, it remains resolved when $b$ is added because the
combined decision rule can always ignore the new channel.

Two graph-level observables summarize an ambiguity graph. The
unresolved-pair fraction
\begin{equation}
U_S(\alpha)=\frac{2\,|E_S|}{n_m(n_m-1)},
\label{eq:unresolved}
\end{equation}
is the fraction of conformer pairs that remain operationally ambiguous, and the
confusability degree
\begin{equation}
a_i(S,\alpha)=\sum_{j\ne i}\mathbf 1\!\left[P_{e,S}(i,j)>\alpha\right],
\label{eq:confusability}
\end{equation}
counts the competitors still confused with conformer $i$. By
Eq.~\eqref{eq:edge-monotonicity}, both are non-increasing as modalities
are added. These summaries let results be phrased in interpretable terms (for
example, ``modality $A$ leaves a fraction $U_A(\alpha)$ of low-energy pairs
ambiguous; adding modality $B$ reduces this to $U_{A\cup B}(\alpha)$'') rather
than as a single classifier accuracy; the specific fractions for the three
molecules are reported in Section~\ref{sec:pilot}.

Because $G_S(\alpha)$ need not be a disjoint union of cliques, its connected
components are not equivalence classes and should not be reported as such; we
report the raw graph and treat any clustering as a derived, $\alpha$-dependent
summary.

\paragraph{Zero-noise limit.}
Under the shared-covariance model~\eqref{eq:bayeserror} with fixed nuisance
parameters (or, under profiling, provided the transformed template manifolds
remain disjoint), contracting the observation covariances to zero
sends $d_S(i,j)\to\infty$ for every pair with $\muvec_{S,i}\ne\muvec_{S,j}$;
pairs whose induced means (or profiled template manifolds) coincide retain
error $\tfrac12$. The edge set of
$G_S(\alpha)$ thus converges to the exactly observationally equivalent pairs: the
finite-resolution graph degenerates to the exact quotient of
Section~\ref{sec:observationlaws}, which is thus the noiseless limit of the
finite-noise theory.

\section{Conformer mixtures and population identifiability}
\label{sec:populations}

Most ordinary spectra are not measurements of a single isolated conformer; they
are mixtures of conformer contributions. This section states criteria that are
logically prior to any inversion: can the chosen additive spectra determine the
populations at all? The three-molecule study below reports pure-conformer
ambiguity graphs and a rank demonstration for additive achiral IR/Raman columns.
The criteria apply to calibrated, unnormalized additive spectra; nonadditive
features such as $(A,B,C)$ and $f_{\bm\mu}$ are excluded from population columns.
Separate populations of rapidly interconverting mirror partners are physically
meaningful only under kinetic preparation, a chiral environment, or a
parity-sensitive measurement.

\subsection{Linear mixture model}

Let the conformer populations be
\begin{equation}
w_i\ge 0,\qquad \sum_{i=1}^{n_m}w_i=1,
\end{equation}
collected in $\wvec\in\Delta^{n_m-1}$, the probability simplex. The additive
mixture model
\begin{equation}
\yvec_a=M_a\wvec+\varepsilon_a,\qquad
M_a=\bigl[\,\muvec_{a,1}\ \muvec_{a,2}\ \cdots\ \muvec_{a,n_m}\,\bigr],
\label{eq:mixture}
\end{equation}
holds only for observables whose measured signal is the population-weighted
sum of the pure-conformer signals. This is the case for genuinely
additive, unnormalized intensity measurements:
\begin{itemize}
\item unnormalized IR absorbance (dilute Beer--Lambert response);
\item unnormalized Raman scattering intensity at fixed excitation and
temperature;
\item the complete simulated rotational line spectrum, whose transition
intensities are population-weighted and depend on the conformer dipole
components, degeneracies, and temperature.
\end{itemize}
Accordingly, the columns of $M_a\in\Rbb^{d_a\times n_m}$ are the unnormalized
conformer signals. Equation~\eqref{eq:mixture} assumes an externally calibrated
absolute intensity scale; the unknown-scale case is given below. The nonlinear
normalization used for pure-conformer discrimination is not applied here. The
model does not apply to the rotational-constant tuple $(A_i,B_i,C_i)$ or to the
permanent-dipole feature $f_{\bm\mu}(c_i)$: a mixture does not present a
population-weighted average of moments of inertia or of dipole magnitudes but a
superposition of transitions. We therefore use $(A,B,C)$ and $f_{\bm\mu}$ only
for pure-conformer discrimination
(Sections~\ref{sec:finiteresolution}--\ref{sec:design}); when they inform
mixtures it is through the additive rotational line spectrum above, not through
\eqref{eq:mixture} applied to the constants themselves. For a set $S$ of
additive modalities, stack the channels,
\begin{equation}
M_S=\begin{bmatrix}M_{a_1}\\ M_{a_2}\\ \vdots\end{bmatrix},
\qquad \yvec_S=M_S\wvec+\varepsilon_S.
\end{equation}

If the experiment has an unknown positive global intensity scale $a$, the mean
model is instead
\begin{equation}
\yvec_S=aM_S\wvec+\varepsilon_S.
\label{eq:mixture-scale}
\end{equation}
Writing $\bm p=a\wvec$ shows that this is a nonnegative linear inverse problem
with no sum constraint on $\bm p$. For interior mixtures, joint identification
of $(a,\wvec)$ for all admissible values therefore requires
$\ker(M_S)=\{0\}$, or $\rank(M_S)=n_m$; after recovering $\bm p$, one obtains
$a=\one^{\Trans}\bm p$ and $\wvec=\bm p/a$. This condition is stronger than the
calibrated-scale criterion below. Thus an unmodeled scale cannot be absorbed
into the simplex constraint without changing identifiability.

\subsection{Exact identifiability}

Two population vectors $\wvec,\wvec'$ produce the same mean observation exactly
when $M_S(\wvec-\wvec')=0$. Because both lie on the simplex,
$\one^{\Trans}(\wvec-\wvec')=0$, so the admissible difference vectors lie in the
simplex tangent space
\begin{equation}
H:=\{h\in\Rbb^{n_m}:\one^{\Trans}h=0\}.
\end{equation}

\begin{theorem}[Population identifiability with calibrated scale]
\label{thm:identifiability}
The conformer populations are uniquely determined by the noiseless mean
observations of modality set $S$ if and only if
\begin{equation}
\ker(M_S)\cap H=\{0\},
\label{eq:idcond}
\end{equation}
equivalently
\begin{equation}
\rank\!\begin{pmatrix}M_S\\ \one^{\Trans}\end{pmatrix}=n_m.
\label{eq:idrank}
\end{equation}
\end{theorem}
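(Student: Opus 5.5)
The plan is to prove the two equivalences separately: first that identifiability is the same as \eqref{eq:idcond}, then that \eqref{eq:idcond} is the same as the rank condition \eqref{eq:idrank}.

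\emph{Identifiability $\Leftrightarrow$ \eqref{eq:idcond}.} I read ``uniquely determined'' in the usual sense: the map $\wvec\mapsto M_S\wvec$, restricted to the simplex $\Delta^{n_m-1}$, is injective.

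For sufficiency, take $\wvec,\wvec'\in\Delta^{n_m-1}$ with $M_S\wvec=M_S\wvec'$. Then $h=\wvec-\wvec'$ lies in $\ker(M_S)$, and $\one^{\Trans}h=1-1=0$ puts it in $H$, so \eqref{eq:idcond} gives $h=0$.

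For necessity, argue by contrapositive. Suppose $0\ne h\in\ker(M_S)\cap H$. Take the uniform interior point $\wvec_0=\one/n_m$ and set $\wvec'=\wvec_0+t h$ with $0<t<\min_i(1/n_m)/\max_i|h_i|$. The bound on $t$ keeps every coordinate positive, and $\one^{\Trans}h=0$ keeps the coordinates summing to one, so $\wvec'\in\Delta^{n_m-1}$. Moreover $\wvec'\ne\wvec_0$ while $M_S\wvec'=M_S\wvec_0$, so identifiability fails. The only point needing care is this one: the statement concerns all admissible populations, so perturbing at an interior point is what keeps the competing population inside the simplex. I expect this to be the step to handle explicitly; otherwise the argument is routine.

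\emph{\eqref{eq:idcond} $\Leftrightarrow$ \eqref{eq:idrank}.} The stacked matrix $\begin{pmatrix}M_S\\ \one^{\Trans}\end{pmatrix}$ has $n_m$ columns. By rank--nullity, its rank equals $n_m$ exactly when its kernel is trivial. A vector $h$ lies in that kernel iff $M_S h=0$ and $\one^{\Trans}h=0$, i.e.\ iff $h\in\ker(M_S)\cap H$. So the kernel of the stacked matrix is precisely $\ker(M_S)\cap H$, and the two conditions coincide.

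I would end with a remark contrasting this with the unknown-scale case \eqref{eq:mixture-scale}. There the extra row $\one^{\Trans}$ is unavailable, which is why that case requires the stronger condition $\rank(M_S)=n_m$.
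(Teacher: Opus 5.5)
Your proof is correct and takes essentially the same route as the paper: the standard kernel argument on the simplex tangent $H$. Sufficiency comes from $\wvec-\wvec'\in\ker(M_S)\cap H$, and necessity from perturbing an interior population along a nonzero null direction in $H$, which you handle correctly so that the competing population stays in the simplex. The rank form then follows from rank--nullity, since the stacked matrix has kernel exactly $\ker(M_S)\cap H$.
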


The proof is a standard kernel argument on the simplex tangent:
admissible population differences lie in $H=\ker(\one^{\Trans})$, and
noiseless observations identify $\wvec$ iff $\ker(M_S)\cap H=\{0\}$,
equivalently~\eqref{eq:idrank}. The full sufficiency/necessity argument is
given in the supplementary material.

Condition~\eqref{eq:idrank} can fail even with noiseless data whenever the
conformer signal columns are linearly dependent, for instance when one
conformer's spectrum is a convex combination of others'. More commonly the
condition holds but the stacked matrix is ill-conditioned, so populations are
identifiable in principle yet poorly determined in practice; this is quantified
in Section~\ref{sec:fisher}.

\section{Fisher information for population recovery}
\label{sec:fisher}

\subsection{Constrained information on the simplex}

Under the Gaussian mixture model~\eqref{eq:mixture} with joint covariance
$\Sig_S$, the mean is linear in $\wvec$, so the Fisher information for the
populations is
\begin{equation}
\Fish_S=M_S^{\Trans}\Sig_S^{-1}M_S.
\label{eq:fisher}
\end{equation}
For fixed $\Sig_S$, differentiation of the Gaussian log-likelihood gives
\begin{align}
\frac{\partial\log p}{\partial\wvec}
&=M_S^{\Trans}\Sig_S^{-1}(\yvec_S-M_S\wvec),\\
\Exp\!\left[
\frac{\partial\log p}{\partial\wvec}
\left(\frac{\partial\log p}{\partial\wvec}\right)^{\Trans}
\right]
&=M_S^{\Trans}\Sig_S^{-1}M_S,
\end{align}
which is~\eqref{eq:fisher}.
Equation~\eqref{eq:fisher} assumes that $\Sig_S$ is positive definite, does not
depend on $\wvec$ (homoscedastic in the populations), and is known or
externally calibrated, and that $\wvec$ lies in the interior of the simplex so
the reduced coordinates below are unconstrained. If the noise is signal-dependent
(so $\Sig_S=\Sig_S(\wvec)$), an additional term involving
$\partial\Sig_S/\partial\wvec$ enters~\eqref{eq:fisher}; at the simplex boundary
the ordinary unconstrained Cram\'er--Rao interpretation requires the usual
qualification for parameters on the boundary of the feasible set. We state
results for the interior, fixed-$\Sig_S$ case and treat the heteroscedastic and
boundary cases as noted.
Population changes must preserve normalization, so only directions in the
tangent space $H$ are physical. Let the columns of $U\in\Rbb^{n_m\times(n_m-1)}$
form an orthonormal basis of $H$. The physically relevant constrained
Fisher information is
\begin{equation}
\Fish_S^{(\Delta)}=U^{\Trans}\Fish_S U.
\label{eq:constrainedfisher}
\end{equation}
Its inverse bounds the covariance of any unbiased estimator of the reduced
coordinates through the Cram\'er--Rao inequality~\cite{Kay1993}, so the smallest
eigenvalue $\lambda_{\min}(\Fish_S^{(\Delta)})$ measures the least-resolved
population contrast. If it is zero, some
population exchange is unidentifiable, consistent with the failure
of~\eqref{eq:idcond}; if it is small, the inverse problem is ill-conditioned
and the corresponding population contrast is recoverable only with large
uncertainty.

For the unknown-scale model~\eqref{eq:mixture-scale}, let
$Q=M_S^{\Trans}\Sig_S^{-1}M_S$. In local coordinates $(a,\bm\theta)$ with
$\wvec=\wvec_0+U\bm\theta$, the mean Jacobian is
$[\,M_S\wvec,\ aM_SU\,]$. Profiling the scalar scale from the joint Fisher
matrix gives the Schur complement
\begin{equation}
\Fish_{\wvec\mid a}^{(\Delta)}
=a^2U^{\Trans}QU
-\frac{a^2(U^{\Trans}Q\wvec)(\wvec^{\Trans}QU)}
{\wvec^{\Trans}Q\wvec},
\label{eq:profiled-scale-fisher}
\end{equation}
provided $\wvec^{\Trans}Q\wvec>0$. The subtracted positive-semidefinite term is
the information lost to the unknown scale. Equation~\eqref{eq:constrainedfisher}
is recovered when the absolute scale is externally calibrated (with $a=1$ in
the convention used here).

For conditioning and numerical-rank diagnostics we use the whitened constrained design
\begin{equation}
\widetilde{M}_S:=\Sig_S^{-1/2}M_S\,U,
\qquad
\Fish_S^{(\Delta)}=\widetilde{M}_S^{\Trans}\widetilde{M}_S,
\label{eq:whitened}
\end{equation}
whose singular values are the square roots of the eigenvalues of
$\Fish_S^{(\Delta)}$. Conditioning statements below refer to $\widetilde{M}_S$,
not to the raw stacked $M_S$: the condition number of $M_S$ alone depends on the
arbitrary physical units and relative scaling assigned to the IR, Raman,
and other additive spectral channels, whereas whitening by $\Sig_S^{-1/2}$ and
restricting to the simplex tangent removes that dependence and yields the
statistically meaningful conditioning of the population inverse problem.

\subsection{Additive information law}

If the measurement noise of modality $b$ is independent of that of $S$
conditional on the population vector, then
$\Sig_{S\cup\{b\}}=\Sig_S\oplus\Sig_b$ and
\begin{equation}
\Fish_{S\cup\{b\}}
=
\Fish_S+M_b^{\Trans}\Sig_b^{-1}M_b.
\label{eq:fisheradd}
\end{equation}
The added term is positive semidefinite because it is a Gram matrix.
Consequently,
\begin{equation}
\Fish_{S\cup\{b\}}-\Fish_S\succeq0,
\qquad
\Fish_{S\cup\{b\}}^{(\Delta)}
-\Fish_S^{(\Delta)}\succeq0.
\label{eq:fisherpsd}
\end{equation}
Thus $\lambda_{\min}(\Fish^{(\Delta)})$ and
$\log\det[\Fish^{(\Delta)}+\lambda I]$ cannot decrease when an independent
additive measurement is added. The block-matrix derivation is given in the
supplementary material.

This is an information-refinement result for population recovery: an
additional independent spectroscopic measurement cannot decrease the Fisher
information, and it strictly increases it in any direction to which the new
modality is sensitive. It is the quantitative, mixture-level analogue of the
refinement law of Theorem~\ref{thm:refinement}.

\section{Optimal selection of the next observable}
\label{sec:design}

The refinement and information laws are monotone, but each additional modality
or spectral window costs acquisition effort. The framework therefore ranks
which measurement to perform next.

\subsection{Objectives for pure-conformer discrimination}

For discriminating single conformers, two natural objectives are to improve the
worst-resolved pair,
\begin{equation}
b^\star=\argmax_b\ \min_{i\ne j} d^2_{S\cup\{b\}}(i,j),
\label{eq:maxmin}
\end{equation}
or to remove the largest number of ambiguity edges,
\begin{equation}
b^\star=\argmin_b\ U_{S\cup\{b\}}(\alpha).
\label{eq:minU}
\end{equation}
Objective~\eqref{eq:maxmin} is an E-optimal-type criterion on pairwise
separations; \eqref{eq:minU} is a direct combinatorial reduction of ambiguity.
Objective~\eqref{eq:maxmin} becomes uninformative whenever the conformer set
contains a symmetry-forbidden pair (an exactly observationally equivalent
mirror pair, Section~\ref{sec:pilot}) because its separation is zero for
every achiral candidate modality, so the minimum over pairs is pinned at
zero regardless of $b$. In that case the max--min objective must either be
restricted to the pairs that are not symmetry-forbidden or extended to include a
parity-sensitive (chiroptical) candidate that can lift the exact degeneracy.
The numerical study below takes the restricted option: every declared mirror
pair is symmetrized before graph construction, so max--min is evaluated on the
mirror-collapsed quotient, where it is not identically zero
(Section~\ref{sec:pilot}). The
\eqref{eq:minU} objective is unaffected because a permanently unresolved pair contributes a
constant edge to every candidate's $U_{S\cup\{b\}}$.

\subsection{Objectives for population recovery}

For population recovery, the same selection logic applies to scalar summaries of
the constrained information~\eqref{eq:constrainedfisher}. With a ridge
$\lambda I$ that regularizes near-unidentifiable cases, the standard alphabetic
and Bayesian optimal-design
criteria~\cite{Fedorov1972,Pukelsheim2006,ChalonerVerdinelli1995} specialize to
\begin{align}
b^\star_{\mathrm D}&=\argmax_b\ \log\det\!\left[\Fish^{(\Delta)}_{S\cup\{b\}}
+\lambda I\right],\\
b^\star_{\mathrm E}&=\argmax_b\ \lambda_{\min}\!\left[
\Fish^{(\Delta)}_{S\cup\{b\}}\right],
\end{align}
applied only to additive population channels admitted by
Section~\ref{sec:populations}.

\subsection{Candidate modalities and spectral windows}

The candidate set need not consist of whole modalities. It may instead contain
spectral windows or experimentally linked feature packages. In the present
study the candidates are evaluated exhaustively, using either the reduction in
unresolved-pair fraction or the improvement in the least-resolved target pair.
The tested candidates include:
\begin{itemize}
\item the IR fingerprint region;
\item the low-frequency (far-IR/terahertz) region, sensitive to torsions;
\item the O--H/N--H stretching region, sensitive to hydrogen bonding;
\item the Raman fingerprint region;
\item the low-frequency Raman region;
\item rotational constants $(A,B,C)$ and permanent-dipole features for
pure-conformer discrimination only;
\item a complete, unnormalized rotational line spectrum for population design;
\item selected XAS windows for heteroatom-rich systems.
\end{itemize}
Restricting the design to windows identifies not merely which spectroscopy is
useful but where the conformational information resides. Population-design
objectives use only additive signals admitted by Section~\ref{sec:populations};
neither $(A,B,C)$ nor $f_{\bm\mu}$ is a population-mixture channel.

\section{Computational protocol}
\label{sec:protocol}

This section documents the computational pipeline used in the study of
Section~\ref{sec:pilot}. It uses only open-source electronic-structure software,
and the downstream DFT, measurement-operator, and analysis stages are fixed by a
single configuration object. The CREST search is stochastic: its metadynamics
seed was not fixed, so the archived downstream analysis is reproducible whereas
a new conformer search is a statistical replication. Versions, thresholds, and
measurement settings are collected in Table~S1 of the
supplementary material. The processed artifacts, configurations, analysis code,
and archived raw conformer-search and electronic-structure records are available
in the public Zenodo deposit described in the Data Availability statement. The
workflow is summarized in Figure~\ref{fig:workflow}.

\begin{figure}[t]
\centering
\resizebox{\linewidth}{!}{%
\begin{tikzpicture}[
  >=Latex, node distance=5.5mm and 7mm, font=\footnotesize,
  stage/.style={draw, rounded corners, align=center, inner sep=4pt,
    minimum height=12mm, text width=34mm},
  gen/.style={stage, fill=blue!5},
  calc/.style={stage, fill=orange!8},
  ana/.style={stage, fill=green!7},
  outbox/.style={stage, fill=gray!10}
]
\node[gen] (crest) {CREST / GFN2-xTB\\conformer sampling};
\node[gen, right=of crest] (filter) {De-duplicate;\\topology filter};
\node[gen, right=of filter] (dft) {Dispersion-corrected\\DFT re-optimization};
\node[gen, right=of dft] (clust) {Post-DFT symmetry\\and torsion audit};
\node[calc, below=10mm of clust] (spectra) {NWChem spectra:\\IR, Raman, $(A,B,C)$, $f_{\bm\mu}$, $G$};
\node[ana, left=of spectra] (meas) {Measurement $\Tmap_a$;\\noise model $\Sigma_S$};
\node[ana, left=of meas] (dist) {Distinguishability\\analysis:\\$d^2$, $P_e$, $G_S(\alpha)$};
\node[outbox, left=of dist] (outp) {Ambiguity graphs\\and design};
\draw[->] (crest) -- (filter);
\draw[->] (filter) -- (dft);
\draw[->] (dft) -- (clust);
\draw[->] (clust) -- (spectra);
\draw[->] (spectra) -- (meas);
\draw[->] (meas) -- (dist);
\draw[->] (dist) -- (outp);
\end{tikzpicture}%
}
\caption{Computational workflow. Conformers are generated with CREST on the
GFN2-xTB surface~\cite{Grimme2019,Pracht2020,Bannwarth2019}, filtered and
re-optimized at a dispersion-corrected DFT level, and audited using principal
torsions, molecular symmetry, and mirror relationships. Harmonic IR and Raman spectra,
rotational constants, dipoles, and free energies are computed with
NWChem~\cite{NWChem2020}; the measurement operator and noise/model-error
covariance (Sections~\ref{sec:observables},~\ref{sec:noise}) convert ideal
spectra into measurement distributions; and the framework produces ambiguity,
population-identifiability, and design diagnostics.}
\label{fig:workflow}
\end{figure}

\subsection{Molecule set and curation}
\label{sec:dataset}

The study of Section~\ref{sec:pilot} applied the pipeline to three molecules
chosen to span a difficulty gradient: 1,2-difluoroethane, ethylene glycol, and
\emph{n}-pentane. All are neutral, closed-shell molecules with low-energy
rotamers.

\paragraph{Curation workflow.} (i) Generate conformers with CREST using the
GFN2-xTB tight-binding surface and metadynamics
sampling~\cite{Grimme2019,Pracht2020,Bannwarth2019}. (ii) Remove
topology-changing structures, identified from a covalent-bond graph built with
Cordero covalent radii~\cite{Cordero2008}, and duplicates. (iii) Reoptimize at a consistent
dispersion-corrected DFT level. (iv) Apply an initial fixed-order Kabsch RMSD
filter~\cite{Kabsch1976}, then audit principal torsions, molecular symmetry, and mirror
relationships after DFT optimization. (v) Retain thermally relevant structures by a
free-energy cutoff. This route was used here, but the initial search remains
stochastic. The initial fixed-order RMSD does not account for all permutations
of symmetry-equivalent atoms, so the final audit uses principal torsions,
molecular end reversal, proper-rotation Kabsch checks, and an explicit
distinction between proper symmetry matches and mirror relations
(Section~\ref{sec:pilot-settings}).

\subsection{Spectroscopic calculations}
\label{sec:spectra}

All DFT stages (the dispersion-corrected geometry optimizations of both the
screen and the production level, the cross-method re-optimizations, and the
harmonic frequency, IR, and Raman calculations) were carried out with the same
open-source NWChem package~\cite{NWChem2020}; CREST/GFN2-xTB was used only for the
initial conformer search. For each retained conformer NWChem yielded:
\begin{itemize}
\item harmonic IR spectra (frequencies and intensities);
\item harmonic Raman spectra (static-limit activities);
\item rotational constants $(A,B,C)$;
\item the permanent-dipole vector (reduced to the reflection-invariant feature
$f_{\bm\mu}$ of Eq.~\eqref{eq:dipolefeature});
\item conformer free energies (for Boltzmann weights).
\end{itemize}
Harmonic frequencies were multiplied by the working factor $s_{\mathrm{harm}}=0.975$
(Table~S1; cf.\ Merrick et
al.~\cite{MerrickMoranRadom2007}). Raman
activities $A_k$ were converted to observed intensities at a fixed
excitation wavenumber $\tilde\nu_{\mathrm{exc}}$ and temperature $T$ by the standard
scattering factor
\begin{equation}
I_k\ \propto\ \frac{(\tilde\nu_{\mathrm{exc}}-\nu_k)^4\,A_k}
{\nu_k\bigl[1-\exp(-hc\,\nu_k/k_BT)\bigr]},
\label{eq:ramanintensity}
\end{equation}
with the same $\tilde\nu_{\mathrm{exc}}$ and $T$ applied to every conformer~\cite{Long2002}. VCD is highly
conformer-sensitive and scientifically valuable but complicates both the
calculation and the software stack; we reserve it, together with XAS for
heteroatom-rich systems, for later work once the framework is established on the
less expensive vibrational and rotational observables.

NWChem supplies rigid-rotor/harmonic-oscillator (RRHO) thermochemistry. We tested
its low-frequency sensitivity with the quasi-RRHO entropy interpolation of
Grimme~\cite{Grimme2012}. For each positive vibrational frequency $\nu_k$,
\begin{align}
w_k&=\left[1+\left(\frac{\tilde\nu_{\mathrm{cut}}}{\nu_k}\right)^p\right]^{-1},
&S_k^{\rm qRRHO}&=w_kS_k^{\rm HO}+(1-w_k)S_k^{\rm FR},
\label{eq:qrrho}
\end{align}
with $\tilde\nu_{\mathrm{cut}}=100\wavenum$ and $p=4$; $S_k^{\rm FR}$ is
Grimme's one-dimensional free-rotor entropy with the reduced modal inertia
built from the isotropically averaged molecular moment of inertia
$B_{\mathrm{av}}=\tfrac13(I_a+I_b+I_c)$. The implemented expressions, including
the free-rotor formula and the modal-inertia definitions, are collected in the
supplementary material. We retained the NWChem RRHO enthalpy correction
and replaced only the vibrational entropy,
$G_{\rm qRRHO}=G_{\rm RRHO}-T(S_{\rm qRRHO}-S_{\rm HO})$, at $298.15\,$K.
Unscaled harmonic frequencies were used for thermochemistry; $s_{\mathrm{harm}}=0.975$
applies only to the displayed spectral measurement operator. The quasi-RRHO
test asks whether low modes change conformer retention or relative weights
under that interpolation.

\subsection{Cross-method evaluation}
\label{sec:inversecrime}

If the simulated observation and the candidate library are produced by exactly
the same calculation, apparent separation can simply reflect self-consistency.
The reported graphs used B3LYP means with the fixed working covariance of
Section~\ref{sec:noise}. A six-case cross-method pseudo-observation diagnostic on
the archived two-conformer-per-molecule subset appears in
Table~\ref{tab:reassign-summary}; experimental consistency checks appear in
Section~\ref{sec:expcompare}. How these relate to the working covariance is
stated once in Section~\ref{sec:discussion}.

\subsection{Noise and model-discrepancy study}
\label{sec:noise}

For a spectral modality the covariance of the processed signal $\muvec$ on the
wavenumber grid is built explicitly as
\begin{equation}
\Sig
=\underbrace{\sigma_{\mathrm{meas}}^2 I}_{\text{measurement noise}}
+\underbrace{(\sigma_{\mathrm m}\sigma_{\mathrm m}^{\Trans})\circ K_{\mathrm m}}_{\text{correlated model error}}
+\underbrace{\sigma_\nu^2\,(g g^{\Trans})\circ K_\nu}_{\text{local frequency-shift error}}
+\underbrace{\sigma_{\mathrm b}^2 K_{\mathrm b}}_{\text{baseline}},
\label{eq:covmodel}
\end{equation}
where $\sigma_{\mathrm{meas}}=\max_k|\mu_k|/\mathrm{SNR}$ is homoscedastic
measurement noise defined by the peak signal-to-noise ratio (SNR);
$\sigma_{\mathrm m}=\rho_{\mathrm m}(|\muvec|+c)$ with
$\rho_{\mathrm m}=0.05$ in the working covariance is a
signal-proportional (heteroscedastic) model-error amplitude correlated by a
squared-exponential kernel $K_{\mathrm m}$ of length $\ell_{\mathrm m}$
(a theoretical frequency/shape error moves adjacent bins together, which
reduces the effective discriminating information relative to i.i.d.\
noise); $g=\mathrm{d}\muvec/\mathrm{d}\nu$ and the rank-structured term
$\sigma_\nu^2(gg^{\Trans})\circ K_\nu$ encodes a local frequency-shift
uncertainty, the term that lets a genuine conformer peak shift be confused with
theoretical error; and $K_{\mathrm b}$ is an optional baseline kernel. Setting
$\ell_{\mathrm m}=0$ recovers a diagonal $\Sig$. Correlated and cross-modality
model-error components can populate the off-diagonal and cross-modality blocks
of $\Sig_S$, which is where the conditional-independence
assumption~\eqref{eq:condindep} is relaxed when warranted. The reported
three-molecule calculations used correlated within-spectrum blocks but a
block-diagonal stack across modalities. Robustness of the reported graphs was
assessed in Section~\ref{sec:pilot} by one-factor and combined stress-test
sweeps of full width at half maximum (FWHM), SNR, local frequency-shift
uncertainty, and model-intensity error about the working point of
Table~S1.

For reproducibility, the covariance used in
Table~\ref{tab:pilot-refinement} was constructed by a fixed algorithm: one
reference signal $\bar\mu_a$ was formed from the ensemble mean of the validated
production minima, Eq.~\eqref{eq:covmodel} was evaluated at $\bar\mu_a$ with the
fixed parameters of Table~S1 and held fixed across all
curation sweeps, $(A,B,C)$ and $f_{\bm\mu}$ carried diagonal covariances, and the
stack was block diagonal. The step-by-step construction is given in the
supplementary material.

The production calculations used
B3LYP-D3(BJ)/def2-TZVP~\cite{Becke1993,LeeYangParr1988,Stephens1994,GrimmeD32010,GrimmeBJ2011,WeigendAhlrichs2005},
with B3LYP-D3(BJ)/def2-SVP for screening and
PBE0-D3(BJ)/def2-TZVP~\cite{AdamoBarone1999} for the cross-method subset. Spectra were evaluated on a
$200$--$4000\wavenum$ grid with $8\wavenum$ Gaussian FWHM and the declared
frequency factor $s_{\mathrm{harm}}=0.975$. The working covariance used peak
SNR~50, model-error fraction $\rho_m=0.05$, model-error correlation length
$20\wavenum$, frequency-shift uncertainty $\sigma_\nu=3\wavenum$, and $5\%$
relative uncertainty for the rotational and dipole features. Quasi-RRHO
retention used $5\,\mathrm{kcal\,mol^{-1}}$
($\tilde\nu_{\mathrm{cut}}=100\wavenum$, $p=4$), and distinguishability is
declared at $\alpha=0.05$ with sweeps over $0.01$ and $0.10$. Complete
software, convergence, curation, and nuisance-parameter settings are given in
supplementary Table~S1.

\section{Application to three molecular ensembles}
\label{sec:pilot}

We now apply the framework end to end to three molecules chosen to span a
difficulty gradient: 1,2-difluoroethane (rigid, two rotamer families), ethylene
glycol (intramolecular hydrogen bonding), and \emph{n}-pentane (a floppy
hydrocarbon with near-degenerate rotamers). The complete pipeline (CREST
conformer search, DFT curation, NWChem spectra, the measurement operator, and
the pure-conformer distinguishability and design diagnostics) was run on
genuine electronic-structure output. Unless noted otherwise, qualitative conclusions use the
audited def2-TZVP geometries and the fixed working covariance
(Table~S1), with the dipole channel evaluated as the
reflection-invariant feature $f_{\bm\mu}$ of Eq.~\eqref{eq:dipolefeature}
(component magnitudes, not signed Cartesian entries). Mirror-related edges
survived in all three examples after symmetry/permutation duplicates were
collapsed.

\subsection{Validated molecular ensembles}
\label{sec:pilot-settings}

Five independent CREST searches followed by the DFT and symmetry audit produced
3, 20, and 7 unique screened representatives for 1,2-difluoroethane, ethylene
glycol, and \emph{n}-pentane, respectively. All representatives were promoted
to the production level. Three ethylene-glycol structures retained one
imaginary mode and were excluded as non-minima, leaving validated production
ensembles of 3, 17, and 7 conformers. A stratified subset of two conformers per
molecule was recomputed at PBE0-D3(BJ)/def2-TZVP for the cross-method
diagnostic, and production-level Raman activities were obtained for all seven
\emph{n}-pentane representatives.

RRHO and quasi-RRHO retention selected the same structures within
$5\,\mathrm{kcal\,mol^{-1}}$ for all three molecules. Across the complete
thermal-window, cap, torsion-threshold, decision-tolerance, and
RRHO/quasi-RRHO sweep, every residual full-stack edge remained a declared
mirror pair; the edge count changed only when the retained ensemble size
changed. The full sensitivity results and conformer-search diagnostics are
reported in the supplementary material.

Before retention, graph construction, and population analysis, each declared
mirror pair was symmetrized by averaging its thermochemical values and every
achiral observable column. The expanded inventory in the supplementary
material lists all validated minima, mirror relations, torsional descriptors,
and excluded imaginary-mode structures.

Three behaviors emerge from these ensembles and organize the results below:
Raman need not refine the IR graph; reflection-invariant $f_{\bm\mu}$ leaves
mirror-related edges; and incomplete symmetry or permutation handling creates
spurious clique structure.

\subsection{Refinement and the limits of achiral spectroscopy}

Table~\ref{tab:pilot-refinement} reports the unresolved-pair fraction for every
single modality and for the complete achiral stack under the primary uncapped
quasi-RRHO $5\,\mathrm{kcal\,mol^{-1}}$ ensemble, and
Figure~\ref{fig:pilot-graphs} shows the corresponding full-modality ambiguity
graphs. Several rows repeat the same value, and that repetition is the point:
under the working covariance no achiral combination improves on IR alone,
because every surviving edge is symmetry-protected, not because the added
modalities are uninformative in general. The intermediate combinations, which
all reproduce the IR row, are tabulated in the supplementary material.
Consistent with Eq.~\eqref{eq:edge-monotonicity}, no entry ever increases as
a modality is added: within each column $U_S$ is non-increasing along any chain
of nested modality sets. Figure~\ref{fig:pilot-spectra} makes the underlying
construction concrete for 1,2-difluoroethane, tracing one pair from ideal
harmonic lines through a broadened, noisy pseudo-observation to a Bayes-error
decision, and contrasting a resolved pair with the degenerate mirror pair whose
Bayes error is exactly $\tfrac12$.

Two qualifications apply to these numbers. First, the graphs
were computed within the B3LYP working model: means and covariance came
from the same electronic-structure method (Section~\ref{sec:noise}). The
statement that the calculated IR means separated all non-mirror pairs is therefore
a within-method separation of the illustrative ensembles, not evidence that IR is
generally sufficient to resolve non-mirror conformers; in particular it does not
survive the combined stress test, under which four non-mirror \emph{n}-pentane
quotient edges appear and are removed only by Raman, rotation, or $f_{\bm\mu}$
(Sections~\ref{sec:covariance-robustness} and~\ref{sec:window-design}). The only out-of-method
probe included here, the six-case cross-method pseudo-observation diagnostic
(Table~\ref{tab:reassign-summary}), returned the intended representative in one of
six cases under fixed calibration (rising to three of six representatives under
the working covariance, four of six counting assignment to the correct achiral
class, and five of six representatives with all six achiral classes correct
under the conservative cross-method covariance, once per-candidate calibration was
profiled; Section~\ref{sec:discussion}). We accordingly phrase
every headline outcome as a within-method separation under the declared
covariance.

Second, the residual full-stack edges in
Figure~\ref{fig:pilot-graphs} are exactly the declared mirror pairs, so their
count (1/8/3) is fixed by the symmetry construction rather than discovered
numerically: mirror partners satisfy $d=0$ and $P_e=\tfrac12$ because the energies and
achiral observable columns were symmetrized to be identical within each
declared mirror pair. The natural object for finite-noise measurement design is therefore
the graph on the exact observational quotient $\Cset_m/{\sim_S}$, in which
mirror partners have already been collapsed. On that quotient the working-covariance
graphs had no edges for all three molecules (the mirror floor is the whole
residual), so the achiral design problem is trivial under the working covariance
and becomes nontrivial only under the \emph{n}-pentane stress-test covariance
(Section~\ref{sec:covariance-robustness}), which introduces genuine non-mirror
quotient edges. Figure~\ref{fig:pilot-graphs} shows the node graph, which
exhibits the symmetry floor; the mirror-collapsed quotient graph, which isolates
the finite-resolution ambiguity that measurement can actually address, is
summarized compactly in Table~\ref{tab:quotient} (node and class counts and
quotient edges under IR and the full stack for the working, cross-method, and
combined stress-test covariances) rather than drawn as a separate figure. Only
one nontrivial quotient edge appears anywhere in the study, in stressed
\emph{n}-pentane, and the full stack removes it.

\begin{table}[htbp]
\centering
\begin{threeparttable}
\caption{Quotient (mirror-collapsed) ambiguity graph summary at $\alpha=0.05$. Each molecule's nodes collapse to exact achiral classes; a \emph{quotient edge} is a finite-noise ambiguity edge between two \emph{distinct} classes, so mirror-partner edges (the symmetry floor) never appear. Column groups are the working covariance, the conservative cross-method covariance, and the combined stress test (FWHM $16\wavenum$, SNR~$20$, $\sigma_\nu=12\wavenum$, $\rho_m=0.30$); \emph{full} is the complete achiral stack.}
\label{tab:quotient}
\footnotesize
\setlength{\tabcolsep}{4pt}
\begin{tabular}{@{}lcccccccc@{}}
\toprule
 & & & \multicolumn{2}{c}{working} & \multicolumn{2}{c}{cross-method} & \multicolumn{2}{c}{combined stress test}\\
\cmidrule(lr){4-5}\cmidrule(lr){6-7}\cmidrule(l){8-9}
Molecule & nodes & classes & IR & full & IR & full & IR & full\\
\midrule
1,2-difluoroethane & 3 & 2 & 0 & 0 & 0 & 0 & 0 & 0\\
ethylene glycol & 17 & 9 & 0 & 0 & 0 & 0 & 0 & 0\\
\emph{n}-pentane & 7 & 4 & 0 & 0 & 0 & 0 & 1 & 0\\
\bottomrule
\end{tabular}
\begin{tablenotes}
\footnotesize
\item The single nontrivial quotient edge in the study is the stressed \emph{n}-pentane pair ($g^{+}T/g^{-}T$--$g^{+}g^{+}/g^{-}g^{-}$), into which four finite-noise node edges collapse. It is removed by IR+Raman, IR+rotation, or IR+$f_{\bm\mu}$, so the full stress-test stack again has zero quotient edges.
\end{tablenotes}
\end{threeparttable}
\end{table}

\begin{figure}[htbp]
\centering
\includegraphics[width=\linewidth]{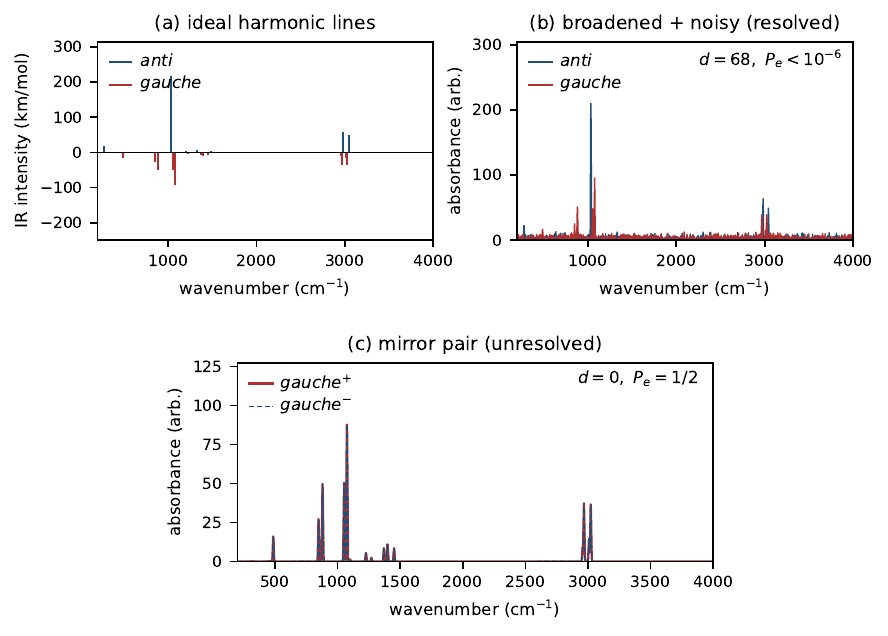}
\caption{From computed lines to a decision, for 1,2-difluoroethane at the
production B3LYP-D3(BJ)/def2-TZVP level. (a)~Ideal harmonic IR lines
(frequency-scaled by $0.975$) for the \emph{anti} conformer (up, blue) and a
\emph{gauche} conformer (down, red). (b)~The same pair after Gaussian
broadening ($8\wavenum$ FWHM) and an illustrative noisy draw (peak SNR $50$),
which the working covariance resolves. (c)~The mirror-image
\emph{gauche}$^{\pm}$ rotamers, which coincide exactly under the achiral map.}
\label{fig:pilot-spectra}
\end{figure}

For ethylene glycol the primary full stack retained eight edges among seventeen
validated minima ($U=0.059$), each a declared mirror pair after the audit and
imaginary-mode exclusions listed in the supplementary inventory. Rotational constants
left a much denser graph ($U=0.485$) under the declared $5\%$ matching
uncertainty, because many hydrogen-bonded rotamers have nearly coincident
moments of inertia. Within the working model, infrared and Raman each separated
all chemically distinct (non-mirror) pairs in this ensemble. The reflection-invariant dipole feature
$f_{\bm\mu}$ alone left ten edges ($U=0.074$): the eight mirror pairs plus two
additional chemically distinct pairs that remained ambiguous under the declared
dipole covariance. Adding Raman, rotation, or $f_{\bm\mu}$ to the IR baseline
nevertheless removed zero of the eight residual IR edges, because those edges
are mirror-protected. No candidate in the present achiral set can lower $U$
(Section~\ref{sec:design}); once those candidates are exhausted, the remaining
structural requirement is a parity-sensitive channel: VCD, Raman optical
activity (ROA), or microwave three-wave mixing. Those alternatives were not
ranked against one another.

For 1,2-difluoroethane, IR and Raman each left a single unresolved pair
($U=0.33$), the mirror-image \emph{gauche}$^{\pm}$ rotamers, and adding Raman
or rotation to IR removed zero edges (vibrational redundancy). The
reflection-invariant dipole feature likewise left that edge:
$U_{f_{\bm{\mu}}}=0.33$ and the full-modality graph retained one edge
(Figure~\ref{fig:pilot-graphs}a). The \emph{anti}/\emph{gauche} contrast was
resolved by IR and by $f_{\bm\mu}$ (the \emph{anti} conformer is apolar by
symmetry); what survived was exactly the achiral mirror floor. At the
def2-SVP screening level the same \emph{gauche}$^{\pm}$ pair already survived every
achiral modality; the production level did not lift it, nor should it under
Eq.~\eqref{eq:dipolefeature}.

For \emph{n}-pentane the audited seven-conformer set retained three full-stack
edges ($U=0.143$): the mirror pairs $g^\mp T$, $g^\mp g^\mp$, and $g^\pm x^\mp$
(Figure~\ref{fig:pilot-graphs}c). Adding rotation or the invariant dipole
feature to IR removed no edge. Archive records that would otherwise form an
all-\emph{trans} clique were symmetry/permutation duplicates; after collapse they
contributed a single self-achiral $TT$ node.

As noted below (Section~\ref{sec:expcompare}), the production ensemble
reproduced the experimentally established \emph{gauche} preference of
1,2-difluoroethane~\cite{Harris1977,Durig1992DFE}.

\begin{figure}[htbp]
\centering
\includegraphics[width=\linewidth]{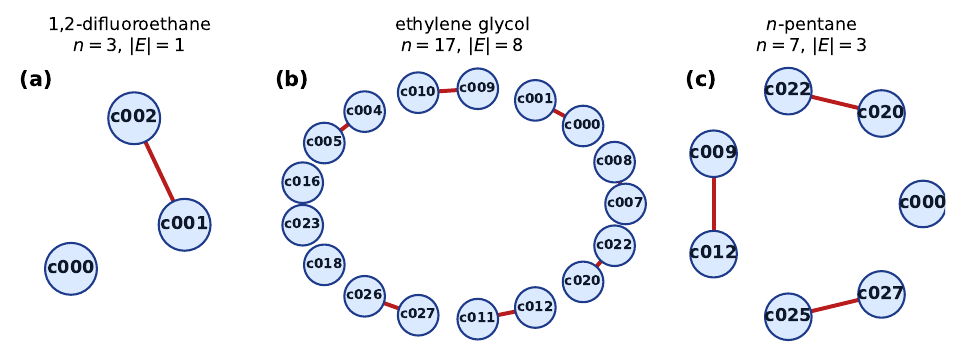}
\caption{Ambiguity-graph summary for the audited, imaginary-mode-validated
production ensembles at $\alpha=0.05$ under the full achiral stack
(IR${+}$Raman${+}$rot${+}f_{\bm\mu}$ where available). Solid red edges are
declared mirror pairs. Every retained edge is mirror-protected:
(a)~difluoroethane, (b)~ethylene glycol, and (c)~\emph{n}-pentane. Node labels
are the conformer identifiers; the conformer inventory table in the
supplementary material maps each to its torsional descriptor and relative
energy.}
\label{fig:pilot-graphs}
\end{figure}

\begin{table}[htbp]
\centering
\caption{Unresolved-pair fraction $U_S(0.05)$ and edge count for single
modalities and the complete achiral stack after quasi-RRHO
$5\,\mathrm{kcal\,mol^{-1}}$ retention. Complete intermediate modality
combinations are reported in the supplementary material.}
\label{tab:pilot-refinement}
\begin{tabular}{@{}lccc@{}}
\toprule
Modality set & 1,2-C$_2$H$_4$F$_2$ & ethylene glycol & \emph{n}-pentane\\
\midrule
IR & 0.333 (1) & 0.059 (8) & 0.143 (3)\\
Raman & 0.333 (1) & 0.059 (8) & 0.143 (3)\\
rotational $(A,B,C)$ & 0.333 (1) & 0.485 (66) & 0.333 (7)\\
dipole $f_{\bm\mu}$ & 0.333 (1) & 0.074 (10) & 0.143 (3)\\
complete achiral stack & 0.333 (1) & 0.059 (8) & 0.143 (3)\\
\bottomrule
\end{tabular}
\end{table}

\subsection{Covariance robustness and modality selection}
\label{sec:covariance-robustness}

Under the working covariance, starting from IR, every achiral candidate
(Raman, rotation, and $f_{\bm\mu}$) removed zero edges on all three molecules:
the residual edges (1/8/3 under the primary uncapped stack) are mirror-related
pairs whose separation vanishes for every reflection-invariant channel
(Table~\ref{tab:pilot-refinement}). Selection over the present candidate set
(Section~\ref{sec:design}) is therefore flat; once it is exhausted, the natural
next structural recommendation is a parity-sensitive modality (VCD, ROA, or
microwave three-wave mixing)~\cite{Nafie2011,Patterson2013}, and no
information-gain ranking among those alternatives was computed. That
recommendation concerns pure-state pairwise separability, not mixture or
kinetics recoverability (Section~\ref{sec:discussion}). This flatness is a
property of the working covariance rather than of the candidate set: under the
combined stress-test covariance the same candidates differ sharply, and
selection becomes informative.

The primary refinement table uses a working covariance with spectral FWHM
$8\wavenum$, peak SNR $50$, local frequency-shift standard deviation
$3\wavenum$, and model-error fraction $\rho_{\mathrm m}=0.05$. One-factor and
combined stress-test sweeps of that covariance are reported in the
supplementary material. The combined stress test (FWHM $16\wavenum$, SNR $20$,
$\sigma_\nu=12\wavenum$, $\rho_{\mathrm m}=0.30$) is not an independently
calibrated instrument model: it takes the pessimistic end of each one-factor
sweep at once, so it is the most conservative corner of the predefined grid. DFE and ethylene glycol remained at their mirror floors
across all one-factor and combined stress-test settings. For \emph{n}-pentane,
the combined stress-test IR-only graph had seven edges, four of which were
non-mirror ambiguities, while IR+Raman, IR+rot, IR+$f_{\bm\mu}$, and the full achiral
stack restored the three mirror-pair floor. Figure~\ref{fig:npentane-stress-graphs}
resolves those edges chemically: the four finite-noise stress-test pairs are
the cross links between the $g^\mp g^\mp$ and $g^\mp T$ mirror families
($P_e\approx0.093$ under IR alone), and each added achiral channel removed
exactly the same four pairs, while the three
mirror edges remained at $P_e=\tfrac12$. The channels differed substantially in
margin: adding rotation gave $P_e\approx5\times10^{-8}$ and $f_{\bm\mu}$ gave
$P_e<10^{-100}$, whereas the vibrational rescue is comparatively narrow: Raman
alone reached $P_e\approx2.6\times10^{-4}$ and IR+Raman
$P_e\approx1.0\times10^{-4}$. The
\emph{n}-pentane Raman activities computed for this work therefore answered a
question the earlier IR-only ensemble could not: the stressed vibrational
ambiguity is removable within vibrational spectroscopy alone, without recourse
to a rotational experiment. Pairwise Bayes errors are tabulated in
the supplementary material. This illustrates the distinction between exact
symmetry protection and finite-resolution ambiguity: added achiral measurements do not remove exact mirror
degeneracy, but they can remove finite-resolution non-mirror ambiguities
introduced by a combined stress-test covariance. Here $f_{\bm\mu}$ was treated as
a feature-ablation channel; because principal-axis dipole magnitudes presuppose
rotational resolution, the physically implementable channel is the joint
rotational package (rotation and $f_{\bm\mu}$ together), as discussed in
Section~\ref{sec:discussion}.

\begin{figure}[htbp]
\centering
\includegraphics[width=\linewidth]{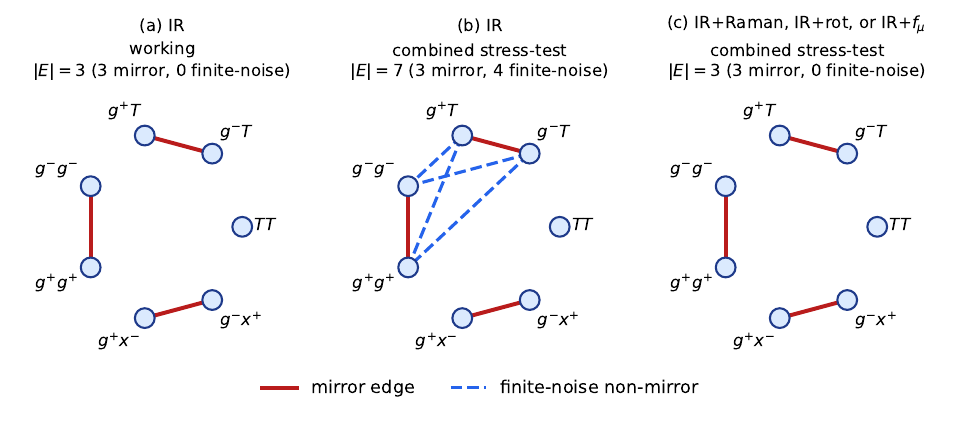}
\caption{Pair-resolved \emph{n}-pentane ambiguity graphs, drawn with identical
node positions and labeled by torsional descriptor. Solid red edges are
declared mirror pairs ($P_e=\tfrac12$); dashed blue edges are finite-noise
non-mirror ambiguities ($P_e>\alpha=0.05$). (a)~IR under the working
covariance. (b)~IR under the combined stress-test covariance. (c)~IR+Raman,
IR+rotation, and IR+$f_{\bm\mu}$ under the stress-test covariance, which share
this topology; their margins differ and are given in
Section~\ref{sec:covariance-robustness}.}
\label{fig:npentane-stress-graphs}
\end{figure}

\subsection{Spectral-window resolution of the stressed \emph{n}-pentane ambiguity}
\label{sec:window-design}

The candidate set of Section~\ref{sec:design} need not consist of whole
modalities. Because the stressed \emph{n}-pentane quotient edge is the only
non-mirror ambiguity anywhere in this study, it supports a concrete test of
window-level design: which sub-experiment (a single vibrational band,
the rotational constants, or $f_{\bm\mu}$) suffices to remove it?
Table~\ref{tab:npentane-window-design} evaluates each candidate on its own under
the combined stress-test covariance, using the five fixed bands of
Section~\ref{sec:design} for both IR and Raman. (Under the working covariance
this edge is absent, so window sufficiency is defined only against the stressed
ensemble.)

\begin{table}[htbp]
\centering
\caption{Targeted-edge rescue analysis for the stressed \emph{n}-pentane quotient edge $\mathrm{gg}$--$\mathrm{gT}$ under the combined stress-test covariance (FWHM $16\wavenum$, SNR~$20$, $\sigma_\nu=12\wavenum$, $\rho_m=0.30$). Each row is a single-candidate measurement evaluated alone against this ambiguity. $P_e(\mathrm{gg},\mathrm{gT})$ is the Bayes error of a cross link of the quotient edge under the candidate alone (the four cross links coincide); ``open'' counts how many of the four node-level edges remain at $P_e>\alpha=0.05$. The $1800$--$2800\wavenum$ windows are empty-window controls (no \emph{n}-pentane fundamentals; $P_e=\tfrac12$). The smallest squared separation over all chemically distinct pairs under each candidate is tabulated in the supplementary material.}
\label{tab:npentane-window-design}
\scriptsize
\setlength{\tabcolsep}{3pt}
\begin{tabular}{@{}lccc@{}}
\toprule
Candidate & $P_e(\mathrm{gg},\mathrm{gT})$ & open stress edges & rescues?\\
\midrule
IR $0$--$500\wavenum$ & 0.487 & 4 & no\\
IR $500$--$1000\wavenum$ & 0.254 & 4 & no\\
IR $1000$--$1800\wavenum$ & 0.260 & 4 & no\\
IR $1800$--$2800\wavenum$ (empty) & 0.500 & 4 & no\\
IR $2800$--$3800\wavenum$ & 0.171 & 4 & no\\
IR (full band) & 0.0926 & 4 & no\\
Raman $0$--$500\wavenum$ & 0.0651 & 4 & no\\
Raman $500$--$1000\wavenum$ & 0.0458 & 0 & yes\\
Raman $1000$--$1800\wavenum$ & 0.0384 & 0 & yes\\
Raman $1800$--$2800\wavenum$ (empty) & 0.500 & 4 & no\\
Raman $2800$--$3800\wavenum$ & 0.0342 & 0 & yes\\
Raman (full band) & $2.6\times10^{-4}$ & 0 & yes\\
rotational $(A,B,C)$ & $1.3\times10^{-7}$ & 0 & yes\\
dipole $f_{\bm\mu}$ & $<10^{-100}$ & 0 & yes\\
\midrule
\multicolumn{4}{@{}l@{}}{\emph{All ``yes'' candidates tie; $500$--$1000\wavenum$ Raman is the narrowest.}}\\
\bottomrule
\end{tabular}
\end{table}

Three features of that table matter for design practice. First, no IR window
resolved the edge, and neither did the full IR band
($P_e\approx0.093$): the failure is not localized in
a region that a better-chosen infrared experiment could target. Second, the
information that resolved it sat in the Raman channel. Of the stressed
pair's total additive squared Mahalanobis separation $d^2$ under the
block-diagonal stress-test covariance, the Raman channel carried $87\%$ and the
IR channel $13\%$, and three
of the five Raman windows ($500$--$1000$, $1000$--$1800$, and
$2800$--$3800\wavenum$) each removed the edge on their own, as did the full
Raman band with a far larger margin ($P_e\approx3\times10^{-4}$). The rescue is
thus not unique: the
three Raman subwindows, the full Raman band, the rotational constants, and
$f_{\bm\mu}$ each resolved all four node-level realizations of the stressed
quotient edge, and no acquisition costs or tie-breaking rule were defined. This tie is a
statement about the targeted edge, not a global minimum-$U$ ordering of the
full graph: rotational constants alone, for example, removed this edge
decisively yet left seven \emph{n}-pentane edges unresolved on their own
(Table~\ref{tab:pilot-refinement}). A global window-level
design would instead evaluate the full-graph objective $U_{S\cup\{b\}}(\alpha)$
of Eq.~\eqref{eq:minU} for each candidate against the complete edge set. Among the tested
Raman subwindows, $500$--$1000\wavenum$ was the narrowest sufficient window,
while the $2800$--$3800\wavenum$ window offered a slightly larger separation
margin. Third, the rescues differed in kind: the
single-window Raman rescues cleared the tolerance only narrowly
($P_e=0.034$--$0.046$ against $\alpha=0.05$) and would not survive a modest
tightening of $\alpha$ or a further inflation of the covariance, whereas the
full Raman band, the rotational constants, and $f_{\bm\mu}$ cleared it decisively.
Window-level selection therefore identified the narrowest tested sufficient
Raman window, but the ranking among near-threshold windows is a property of the
declared covariance and tolerance rather than a robust chemical claim. The four
stressed node edges are equivalent under the mirror symmetry of the two
families, so each candidate yields a single distinct Bayes error; the
global-minimum separations over all chemically distinct pairs are given in the
supplementary material.

The three sufficient Raman windows have a common physical origin, summarized
in three statements (per-mode values in the supplementary material,
Table~S7). First, the discrimination was carried
almost entirely by redistribution of Raman activity among skeletal and C--H
modes rather than by large frequency shifts: the largest between-family
activity ratios exceeded a factor of two while the corresponding mode frequencies
differed by at most a few $\wavenum$. Second, the discriminating motions are
backbone-coupled CH$_2$ rocks ($500$--$1000\wavenum$), CH$_2$ wag/scissors
modes whose activity is reshuffled at conserved window-integrated total
($1000$--$1800\wavenum$), and symmetry-sensitive C--H stretches whose
combinations fragment in the lower-symmetry gT geometry
($2800$--$3800\wavenum$). Third, these same modes carry small or
family-insensitive IR dipole derivatives, which is why the torsional signature
is invisible to IR in every window.

As elsewhere in Section~\ref{sec:pilot}, these are within-method separations
under the declared covariance (Section~\ref{sec:discussion}). The
$1800$--$2800\wavenum$ window contains no \emph{n}-pentane fundamentals, so
its entries are the empty-window limit $P_e=\tfrac12$ rather than a measurement
statement.

\subsection{Population identifiability and numerical conditioning}
\label{sec:population-demo}

We separate two questions the literature sometimes conflates: exact
identifiability (is the population map injective?) and numerical conditioning (how
well can the identifiable contrasts be recovered?).

\emph{Exact rank, stated analytically.}
After mirror symmetrization the achiral additive IR and Raman columns of two
mirror partners are identical. Consequently, $M$ has the null vector
$e_i-e_j$ for every declared mirror pair $(i,j)$, and each such vector lies in
the simplex tangent $\one^{\Trans}h=0$. These $n_{\mathrm{mirror}}$ independent
directions imply
$\rank([M;\one^{\Trans}])\le n_m-n_{\mathrm{mirror}}$ exactly, so separate
mirror-partner populations are exactly unidentifiable under achiral
additive spectra, while mirror-collapsed class populations are the natural
identifiable target. Numerically, the rank reached this upper bound for all
three ensembles at the stated SVD tolerance (two for DFE, nine for ethylene
glycol, and four for \emph{n}-pentane IR with $n=7$;
Table~S11), indicating no additional
class-level dependencies at that numerical resolution. The analytic argument
certifies the mirror null directions; the numerical rank assesses whether
additional dependencies occur among the collapsed classes.

\emph{Numerical conditioning of the identifiable subspace.}
Exact rank says nothing about how well the identifiable class contrasts can be
estimated. Table~\ref{tab:population-conditioning} reports the whitened
constrained Fisher information
$\Fish^{(\Delta)}=U^{\Trans}M^{\Trans}\Sig^{-1}MU$ (with $U$ an orthonormal
simplex-tangent basis) for both the full-node design and the mirror-collapsed
class design. Here $M$ and $\Sig$ were built in the same signal units: the
columns of $M$ are the unnormalized additive conformer signals (no
pure-spectrum area normalization), and $\Sig$ was re-created by the spectral
covariance algorithm of Section~\ref{sec:noise} applied to the unnormalized
population mean $\bar\mu=\tfrac1n\sum_i M_{:,i}$, with stacked modalities kept in
calibrated relative-intensity units; the reported eigenvalues and variance bounds
are therefore dimensionally consistent. The full-node design has at least
$n_{\mathrm{mirror}}$ exact zero eigenvalues (one for DFE, eight for ethylene
glycol, three for \emph{n}-pentane); numerically, no additional near-zero modes
were detected at the reported tolerance. On the
mirror-collapsed class design $\Fish^{(\Delta)}$ was positive definite and
numerically well conditioned within the declared calibrated-scale working model:
the Fisher condition number was $\kappa\le7.8$ for all three molecules under IR
(whitened-design condition number $\sqrt{\kappa}\lesssim2.8$), and the
basis-invariant worst-case variance bound over unit-norm class contrasts,
$1/\lambda_{\min}(\Fish^{(\Delta)})=\max_{\|v\|=1}v^{\Trans}(\Fish^{(\Delta)})^{-1}v$,
was at most $3\times10^{-3}$. Adding
Raman lowered ethylene glycol's condition number from $7.78$ to $6.14$ and
\emph{n}-pentane's from $5.89$ to $4.77$, and tripled \emph{n}-pentane's
smallest class-contrast eigenvalue. Adding a modality that is redundant for
distinguishability (Table~\ref{tab:pilot-refinement}) can therefore still
sharpen population recovery, since the two criteria respond to different
functionals of the same design.

\begin{table}[htbp]
\centering
\begin{threeparttable}
\caption{Whitened constrained-Fisher conditioning of the mirror-symmetrized, \emph{unnormalized} additive population design (calibrated global scale, working covariance). Entries are for the mirror-collapsed class design: least-resolved contrast $\lambda_{\min}$, Fisher condition number $\kappa=\lambda_{\max}/\lambda_{\min}$, and the basis-invariant worst-case variance bound $1/\lambda_{\min}$ over unit-norm tangent contrasts of $\Fish^{(\Delta)}=U^{\Trans}M^{\Trans}\Sig^{-1}MU$. The full-node design carries $n_{\mathrm{mirror}}$ exact zero modes, one per declared mirror pair.}
\label{tab:population-conditioning}
\footnotesize
\setlength{\tabcolsep}{3pt}
\begin{tabular}{@{}llccccc@{}}
\toprule
Molecule & mods & $n_{\mathrm{cl}}$ & full-node & class & class & class\\
 & & & $\lambda_{\min}$ & $\lambda_{\min}$ & $\kappa$ & $1/\lambda_{\min}$\\
\midrule
1,2-difluoroethane & IR & 2 & $0$\tnote{a} & $7613$ & $1.00$ & $1.3\times10^{-4}$\\
 & IR+Raman & 2 & $0$\tnote{a} & $8792$ & $1.00$ & $1.1\times10^{-4}$\\
ethylene glycol & IR & 9 & $0$\tnote{a} & $2.2\times10^{4}$ & $7.78$ & $4.6\times10^{-5}$\\
 & IR+Raman & 9 & $0$\tnote{a} & $3.2\times10^{4}$ & $6.14$ & $3.1\times10^{-5}$\\
\emph{n}-pentane & IR & 4 & $0$\tnote{a} & $334$ & $5.89$ & $3.0\times10^{-3}$\\
 & IR+Raman & 4 & $0$\tnote{a} & $1004$ & $4.77$ & $10.0\times10^{-4}$\\
\bottomrule
\end{tabular}
\begin{tablenotes}
\footnotesize
\item[a] Exact zero by construction: identical mirror columns make $\Fish^{(\Delta)}$ singular; the residual is at the level of floating-point round-off ($<10^{-6}$ relative to $\lambda_{\max}$).
\item The last column is the worst-case variance bound over unit-norm class contrasts, $1/\lambda_{\min}(\Fish^{(\Delta)})$; unlike a diagonal of $(\Fish^{(\Delta)})^{-1}$ it is invariant to the choice of tangent basis $U$.
\end{tablenotes}
\end{threeparttable}
\end{table}

These are calibrated-scale results (known global intensity). Under the
unknown-scale model~\eqref{eq:mixture-scale} the governing criterion is instead
the stronger rank condition $\rank(M_S)=n_m$; the scale is handled by introducing
and profiling a global scale nuisance parameter, whose Schur-complement
information~\eqref{eq:profiled-scale-fisher} subtracts the direction along $\wvec$
from $\Fish^{(\Delta)}$ (an information-projection, not a set-theoretic quotient),
and the corresponding full-node rank outcome is unchanged by the exact mirror
degeneracies. On the mirror-collapsed designs, however, $M$ was full column rank
for every molecule and modality set. With the same working covariance, at equal
class populations and unit global scale, the profiled-scale $\lambda_{\min}$
remained positive for all six designs, so unknown scale introduced no additional
local class-population null direction (values in Table~S11).

\subsection{Experimental consistency checks}
\label{sec:expcompare}

Three limited comparisons examined the energy ordering and selected observables
against literature values. For 1,2-difluoroethane, the production calculation
recovered the experimentally established \emph{gauche}
preference~\cite{Harris1977}: $\Delta H_{\mathrm{calc}}
=1.07\,\mathrm{kcal\,mol^{-1}}$ and $\Delta G_{298,\mathrm{calc}}
=0.88\,\mathrm{kcal\,mol^{-1}}$, compared with the experimental Raman value
$\Delta H_{\mathrm{exp}}=1.98\pm0.08\,\mathrm{kcal\,mol^{-1}}$. The ordering is
correct, although the calculated enthalpy difference is smaller by
$0.91\,\mathrm{kcal\,mol^{-1}}$.

For the assigned 1,2-difluoroethane~\cite{Craig1997,Takeo1986,CCCBDB,Butcher1971}
and ethylene-glycol~\cite{ChristenMuller2003,Christen2001,MullerChristen2004,CDMS2005,CDMS}
conformers, production rotational constants agreed with experiment to
approximately $1\%$ across the twelve tabulated constants. Dipole magnitudes
agreed within approximately $2\%$ for 1,2-difluoroethane and $7\%$ for ethylene
glycol, although individual ethylene-glycol principal-axis components showed
larger errors. Passing the experimental mean vectors through the same decision
rule reproduced the qualitative computational result: the chemically distinct
ethylene-glycol $aGg'$/$gGg'$ pair was unresolved by rotational constants alone
but resolved by the dipole-component feature, whereas the difluoroethane
\emph{anti}/\emph{gauche} pair was resolved by either. Full numerical
comparisons and parity plots are given in the supplementary material.

\section{Discussion and limitations}
\label{sec:discussion}

All reported $U_S$, edge sets, $P_e$ values, and design nominations are
sensitivity results under the curated production ensembles and the explicit
working covariance (Table~S1). They are not empirical
out-of-method error rates or a chemical-space survey. The assumptions and
scope limits that qualify those results are collected here.

\paragraph{Observation model.}
The closed-form Bayes error~\eqref{eq:bayeserror} and Fisher
information~\eqref{eq:fisher} assume shared-covariance Gaussian observations;
heavy-tailed or multiplicative noise would require Chernoff information and a
numerical Bayes error. The error and edge monotonicity statements,
Eqs.~\eqref{eq:errmonotone} and~\eqref{eq:edge-monotonicity}, are
data-processing consequences and survive that change. The refinement inclusion in
Theorem~\ref{thm:refinement} and the Bayes-error/edge monotonicity relations do
not require conditional independence: adding any observation, dependent or
not, can only refine the exact quotient and remove ambiguity edges. Conditional independence
is required only for equality of $\sim_{S\cup\{b\}}$ with the intersection
$\sim_S\cap\sim_b$, for additive squared Mahalanobis separation, and for the
additive Fisher-information relation~\eqref{eq:fisheradd}. Shared theoretical
bias violates that factorization, which is why $\Sig_S$ may carry cross-modality
blocks. The present numerical stack is block diagonal across modalities and
therefore does not estimate shared cross-channel DFT bias. Mixture analysis assumes linear intensity response in
population (dilute Beer--Lambert / standard Raman) and does not license
$(A,B,C)$ or $f_{\bm\mu}$ as mixture columns.

\paragraph{Working covariance and inverse crime.}
Graphs are built from B3LYP means with the fixed covariance algorithm of
Section~\ref{sec:noise}. Recomputing a stratified subset of two conformers per
molecule at PBE0-D3(BJ)/def2-TZVP gave mode-frequency discrepancies of
approximately $11$--$13\wavenum$ relative to B3LYP, concentrated in the
fingerprint region, and cross-method intensity-error fractions of
$0.06$--$0.10$; these values define the conservative cross-method covariance
used below and are tabulated in the supplementary material (Table~S8). The
two-conformer PBE0 discrepancy table and the
six-case cross-method pseudo-observation diagnostic
(Table~\ref{tab:reassign-summary}) diagnose parts of that model; they are not
substituted into $\Sig_S$ and are not a validation accuracy.
The experimental energetics, rotational/dipole comparisons, and experimental-mean
ambiguity check in Section~\ref{sec:expcompare} are likewise consistency checks on
pieces of the forward model, not calibration of the full multi-conformer graph.

\paragraph{Nuisance parameters and alignment.}
The reported graphs fix the measurement nuisance parameters $\eta_a$ of
Eq.~\eqref{eq:obsmodel}: a fixed global frequency scale $s_{\mathrm{harm}}=0.975$,
no profiled or marginalized shift, and area normalization. To quantify the effect
of an unknown calibration we recomputed, for every template pair, the worst-case
bounded-nuisance separation: the minimum Mahalanobis distance between the two
transformed template manifolds when each candidate may choose its own frequency
scale and rigid shift within the declared calibration window (the explicit
definition, optimization grid, and the distinct decision problems that
calibration can define are given in the supplementary material). Since the fixed calibration is admissible, profiling can
only lower the separation and hence only add edges relative to the fixed graph.
The recomputation was repeated under a conservative cross-method covariance, with
spectral variance terms raised componentwise to the larger of the working value
and the two-conformer PBE0--B3LYP estimates of
Table~S8
($\sigma_\nu\approx11$--$13\wavenum$, $\rho_m\approx0.06$--$0.10$; construction
in the supplementary material). In every combination of
modality set (IR or full stack), covariance (working or cross-method), and
profiling mode (fixed, scale, or scale$+$shift), no non-mirror edges appeared, and
the smallest chemically distinct separation remained far above the resolvability
threshold $d^\ast\approx3.29\,\sigma$ (worst case $d_{\min}\approx19\,\sigma$,
\emph{n}-pentane IR under the cross-method covariance with scale profiling; the
full grid is tabulated in the supplementary material). Within both the declared
working model and this conservative cross-method covariance, the separation of
chemically distinct templates is therefore robust to calibration
uncertainty; the residual ambiguity is the symmetry floor.

Applying the identical treatments directly to the six cross-method
pseudo-observation queries (each PBE0 spectrum a fixed query against the full
B3LYP library, with every candidate template given its own calibration before
scoring) gives Table~\ref{tab:reassign-summary}. Under fixed calibration only one
of six queries recovered the intended
representative, with the errors falling in chemically different classes; allowing
per-candidate scale$+$shift alignment recovered three of six under the working
covariance (four of six counting assignment to the correct achiral class, one
additional error being a mirror-partner interchange) and five of six under the
conservative cross-method covariance, with the correct achiral class recovered in
all six. These
results are consistent with a substantial calibration-related component in the
cross-method assignment failures that calibration profiling largely
removes; the single residual error under the strongest treatment was a
mirror-partner interchange within one achiral class (the symmetry floor). This is an illustrative diagnostic on a
two-conformer-per-molecule cross-method subset, not a validation accuracy; a
production covariance fit on a larger cross-method set with displacement-vector
mode matching remains the outstanding extension.

\begin{table}[t]
\centering
\caption{Cross-method reassignment summary for the six PBE0$\to$B3LYP
pseudo-observation queries: number recovering the intended representative and
number recovering the correct achiral class under three covariance/calibration
treatments. The full query-by-query reassignment table, including the scale-only
treatments and the fixed cross-method baseline, is given in the supplementary
material.}
\label{tab:reassign-summary}
\footnotesize
\begin{tabular}{@{}lcc@{}}
\toprule
Treatment & correct representative & correct achiral class\\
\midrule
fixed, working covariance & 1/6 & 1/6\\
scale$+$shift, working covariance & 3/6 & 4/6\\
scale$+$shift, cross-method covariance & 5/6 & 6/6\\
\bottomrule
\end{tabular}
\end{table}

\paragraph{Rotational constants and the dipole feature.}
Because principal-axis dipole components are obtained from Stark measurements on
resolved rotational transitions, $f_{\bm\mu}$ presupposes a rotational
experiment; the IR$+f_{\bm\mu}$ result is therefore a mathematical
feature-ablation test, and the physically implementable object is a joint
rotational-spectroscopy package or a conditional sequence in which $f_{\bm\mu}$
becomes available only after rotational resolution (detailed in the
supplementary material).

\paragraph{Electronic structure and thermochemistry.}
Quasi-RRHO reduces the most direct low-frequency entropy artifact but does not
replace coupled multidimensional hindered-rotor thermochemistry. The working
spectral scale $s_{\mathrm{harm}}=0.975$ is a declared working harmonic-frequency
factor used for the measurement operator; it is not specifically parameterized
for B3LYP-D3(BJ)/def2-TZVP in NWChem
(cf.\ Merrick et al.~\cite{MerrickMoranRadom2007}). \emph{n}-pentane Raman
activities are now computed at the study level, so the stress-test question is
answered directly (Sections~\ref{sec:covariance-robustness}
and~\ref{sec:window-design}); two limitations remain. The PBE0 cross-method
subset contains no Raman, so the cross-method discrepancy estimate, and hence
the conservative cross-method covariance, is calibrated
on IR residuals alone and is applied to the Raman channel by assumption. And
Raman intensities are static-limit polarizability derivatives converted to a
single excitation wavelength, so resonance and frequency-dispersion effects are
outside the model.

\paragraph{Conformer search and curation.}
Five CREST searches per molecule have incomplete overlap. Promoting every unique
screened representative and excluding imaginary-mode non-minima is necessary
before graph summaries are interpreted, but it cannot prove that every thermally
accessible minimum was found. A leave-one-search-out rematch of archived CREST
geometries to production representatives (supplementary material) retained
mirror-only residual edges in every omit-one case examined here, while
\emph{n}-pentane still accumulated unique matches across the five-search
archive. That rematch uses a deliberately permissive $0.85\,\angstrom$ Kabsch
threshold, far coarser than the $0.125\,\angstrom$ de-duplication radius, so
it measures coverage rather than one-to-one assignment, and a single
ethylene-glycol search can therefore appear to cover many production
representatives despite the lower per-search merged-set overlap reported above.
A permutation- and torsion-aware matcher with reported threshold sensitivity
would be required to convert that coverage statement into a per-conformer
recovery claim. Larger validation sets could draw on GEOM~\cite{GEOM2022} or
QVib~\cite{QVib2026}, subject to their release formats and licensing.

\paragraph{Design nominations.}
Parity-sensitive nominations (VCD, ROA, microwave three-wave mixing) follow from
mirror protection under achiral maps. They are not computed information-gain
comparisons among those modalities, and they do not by themselves establish
kinetic isolation or equilibrium-mixture recoverability for interconverting or
enantiomerically canceling ensembles.

\section{Conclusion}
\label{sec:conclusion}

Exact observational equivalence and finite-noise ambiguity answer different
questions: the former defines a quotient, whereas the latter defines a
non-transitive graph whose edges can only be removed by additional data.
For additive, unnormalized spectra, population recovery has corresponding rank
and constrained-Fisher criteria, with a stricter condition when the global
intensity scale is unknown.

All numerical results in the three-molecule calculation are conditional on the
declared covariance and calibration models. Under the within-method working
covariance, the calculated IR means separated all non-mirror pairs and the
residual full-stack edges were the mirror pairs fixed by the symmetry
construction; separate mirror-partner populations are exactly unidentifiable
under achiral additive spectra, while the mirror-collapsed class populations
were numerically well conditioned
(Table~\ref{tab:population-conditioning}). The accompanying cross-method
diagnostic showed that fixed calibration failed on most queries while
scale-and-shift profiling under a conservative cross-method covariance recovered
the correct achiral class in every case (Table~\ref{tab:reassign-summary}),
consistent with a substantial calibration-related component in the cross-method
assignment error, and worst-case joint spectral nuisance profiling under both
covariances added no non-mirror edges. On the exact
observational quotient the working-covariance graphs had no edges, so achiral
measurement design is trivial there; it becomes nontrivial only under the
combined stress-test covariance for \emph{n}-pentane, which created four
IR-only non-mirror quotient edges (the $g^\mp g^\mp$--$g^\mp T$ cross links of
Figure~\ref{fig:npentane-stress-graphs}). Raman, rotation, and $f_{\bm\mu}$ each
removed them, and the per-window analysis localized the discriminating
information: no infrared band sufficed, whereas three Raman bands did
individually, with the $500$--$1000\wavenum$ Raman window the
narrowest tested sufficient window
(Table~\ref{tab:npentane-window-design}). A modality that is redundant for
distinguishability under the working covariance is thus not redundant under a
stressed one. Parity-sensitive measurements remain the natural candidates for the
residual mirror edges. The principal validation limitation is the illustrative
two-conformer-per-molecule cross-method subset: a production
covariance fit on a larger cross-method set with displacement-vector mode
matching, a cross-method subset that includes Raman, and computed
information-gain comparisons among parity-sensitive modalities are the
priorities for extending the numerical demonstration.

\section*{Supplementary Material}
See the supplementary material for the complete computational settings and
covariance-construction algorithm; conformer-search, ensemble, and covariance
robustness analyses; pair-resolved \emph{n}-pentane stress-test results and
Raman-mode data; cross-method discrepancy, calibration, and reassignment
details; population-rank calculations and derivations; experimental comparison
tables; and the expanded conformer inventory. The complete reproducibility
archive, including the processed observables, configurations, analysis code,
software-environment records, integrity records, and archived raw
conformer-search and electronic-structure files, is publicly available in the
Zenodo deposit identified in the Data Availability statement.

\section*{Conflict of Interest}
The authors have no conflicts to disclose.

\section*{Author Contributions}
Megan Simons: Conceptualization, Methodology, Software, Validation, Formal
analysis, Investigation, Data curation, Writing -- original draft, Writing --
review \& editing, Visualization, Project administration. Jonathan Washburn:
Conceptualization, Methodology, Software, Investigation, Writing -- review \&
editing, Supervision.

\section*{Data Availability}
The data and analysis code supporting the findings of this study are openly
available in the Zenodo repository~\cite{SimonsWashburn2026Data}. The deposit
contains the processed observables, configurations, software-environment
records, analysis scripts used to generate the reported tables and figures, a
script-to-artifact manifest, SHA-256 integrity records, and the archived raw
conformer-search and electronic-structure inputs and outputs from which the
processed observables were derived. Every reported table and figure can
therefore be regenerated end-to-end without contacting the authors.

The open-source \texttt{conformer\_spec} implementation is released under the
MIT License, and the deposited data and documentation are released under
CC BY 4.0. Script and artifact filenames carry frozen campaign tags
(\texttt{v15}--\texttt{v23}); these are stable archive identifiers rather than
manuscript version numbers. Re-running CREST statistically replicates, rather
than exactly reproduces, the stochastic conformer search; the archived search
records and deterministic downstream analyses provide the reproducible record.

\end{document}